\title{Hereditary Graph Classes: When the Complexities of Colouring and Clique Cover Coincide\thanks{This paper received support from EPSRC (EP/K025090/1) and the Leverhulme Trust (RPG-2016-258).}}
\author{Alexandre Blanch\'e\inst{1} \and Konrad K. Dabrowski\inst{2} \and Matthew Johnson\inst{2} \and Dani\"el~Paulusma\inst{2}}
\institute{\'Ecole normale sup\'erieure de Rennes,\\
D\'epartement Informatique et T\'el\'ecommunications,\\
Campus de Ker Lann, Avenue Robert Schuman, 35170 Bruz, France\\
\texttt{alexandre.blanche@ens-rennes.fr}
\and
School of Engineering and Computing Sciences, Durham University,\\
Science Laboratories, South Road, Durham DH1 3LE, United Kingdom\\
\texttt{\{konrad.dabrowski,matthew.johnson2,daniel.paulusma\}@durham.ac.uk}}
\newcommand{\ssi}{\subseteq_i}
\newcommand{\NP}{{\sf NP}}
\DeclareMathOperator{\cw}{cw}
\newcounter{ctrclaim}[theorem]
\newcounter{ctrcase}[theorem]
\newcounter{ctrsubcase}[ctrcase]
\renewcommand{\thectrsubcase}{\thectrcase\alph{ctrsubcase}}
\newcommand\displaycase[1]{{\bf #1}}
\newcommand\faketheorem[1]{\displaycase{#1}}
\newcommand{\qedllncs}{\qed}
\newcommand{\clm}[1]{\setcounter{ctrcase}{0}\medskip\phantomsection\refstepcounter{ctrclaim}\noindent\displaycase{Claim \thectrclaim. }{\em #1}\\}
\newcommand{\clmnonewline}[1]{\setcounter{ctrcase}{0}\medskip\phantomsection\refstepcounter{ctrclaim}\noindent\displaycase{Claim \thectrclaim. }{\em #1}}
\newcommand{\shortclm}[1]{\setcounter{ctrcase}{0}\phantomsection\refstepcounter{ctrclaim}\noindent\displaycase{Claim \thectrclaim. }{\em #1}}
\newcommand{\thmcase}[1]{\medskip\phantomsection\refstepcounter{ctrcase}\noindent\displaycase{Case \thectrcase. }{\em #1}\\}
\newcommand{\thmsubcase}[1]{\medskip\phantomsection\refstepcounter{ctrsubcase}\noindent\displaycase{Case \thectrsubcase. }{\em #1}\\}
\begin{document}
\maketitle
\begin{abstract}
A graph is $(H_1,H_2)$-free for a pair of graphs $H_1,H_2$ if it contains no induced subgraph isomorphic to~$H_1$ or~$H_2$.
In 2001, Kr\'al', Kratochv\'{\i}l, Tuza, and Woeginger initiated a study into the complexity of {\sc Colouring} for $(H_1,H_2)$-free graphs. 
Since then, others have tried to complete their study, but many cases remain open.
We focus on those $(H_1,H_2)$-free graphs where~$H_2$ is~$\overline{H_1}$, the complement of~$H_1$.  
As these classes are closed under complementation, the computational complexities of {\sc Colouring} and {\sc Clique Cover} coincide. 
By combining new and known results, we are able to classify the complexity of {\sc Colouring} and {\sc Clique Cover} for $(H,\overline{H})$-free graphs 
for all cases
except when $H=sP_1+\nobreak P_3$ for $s\geq 3$ or $H=sP_1+\nobreak P_4$ for $s\geq 2$.  
We also classify the complexity of {\sc Colouring} on graph classes characterized by forbidding a finite number of self-complementary induced subgraphs, and we initiate a study of $k$-{\sc Colouring} for $(P_r,\overline{P_r})$-free graphs.
\end{abstract}

\section{Introduction}\label{s-intro}

A colouring of a graph is an assignment of labels, called colours, to its vertices in such a way that no two adjacent vertices have the same label. 
The corresponding decision problem, {\sc Colouring}, which is that of deciding whether a given graph can be coloured with at most~$k$ colours for some given positive integer~$k$, is a central problem in discrete optimization. 
Its complementary problem {\sc Clique Cover} is that of deciding whether the vertices of a graph can be covered with at most~$k$ cliques.
As {\sc Colouring} and {\sc Clique Cover} are \NP-complete even for $k=3$~\cite{Lo73}, it is natural to restrict their input to some special graph class. 
A classic result in this area is  due to Gr\"otschel, Lov\'asz, and Schrijver~\cite{GLS84}, who showed that {\sc Colouring} is polynomial-time solvable for perfect graphs.
However, for both problems, finding the exact borderline between tractable and computationally hard graph classes is still open.

A graph class is {\it hereditary} if it can be characterized by some set of forbidden induced subgraphs, or equivalently, if it is closed under vertex deletion.
The aforementioned class of perfect graphs is an example of such a class, as a graph is perfect if and only if it contains no induced odd holes and no induced odd antiholes~\cite{CRST06}.
For the case where exactly one induced subgraph is forbidden, Kr\'al', Kratochv\'{\i}l, Tuza, and Woeginger~\cite{KKTW01} were able to prove a complete dichotomy; see also \figurename~\ref{fig:P1+P3-P4} (we write $G'\ssi G$ to denote that the graph~$G'$ is an {\em induced} subgraph of the graph~$G$).

\begin{theorem}[\cite{KKTW01}]\label{t-col-H-free}
Let~$H$ be a graph.
The {\sc Colouring} problem on $H$-free graphs is polynomial-time solvable if $H \ssi P_1+\nobreak P_3$ or $H \ssi P_4$ and it is \NP-complete otherwise.
\end{theorem}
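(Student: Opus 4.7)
The plan is to prove the two directions of the dichotomy separately, using structural algorithms for the tractable side and classical reductions for the hardness side.

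For the tractable direction, the two families $H\ssi P_4$ and $H\ssi P_1+P_3$ should be handled in parallel. If $H\ssi P_4$, then every $H$-free graph is a cograph, and $\chi$ is computed in linear time by the modular-decomposition recursion (the maximum across components and the sum across co-components). If $H\ssi P_1+P_3$, I would exploit that $\overline{P_1+P_3}$ is the paw and invoke Olariu's theorem that every connected paw-free graph is either triangle-free or complete multipartite. By complementation, the chromatic number of $G$ is the sum over the anticomponents of $G$, and each anticomponent is either a disjoint union of cliques (coloured trivially) or the complement of a triangle-free graph (coloured via a matching argument, since all its cliques have size at most~$2$).

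For the hardness direction, I would split on the global structure of $H$. If $H$ contains a cycle, then $H$-free graphs include all graphs of girth exceeding $|V(H)|$, on which \textsc{Colouring} is \NP-hard by an Erd\H{o}s-style high-girth reduction. If $H$ is a forest with maximum degree at least~$3$, then $K_{1,3}\ssi H$, so $H$-free graphs contain every line graph, and \textsc{Colouring} on line graphs is \textsc{Edge Colouring}, \NP-hard by Holyer. Otherwise $H$ is a linear forest outside the tractable list, and I would show that it must contain one of a small set of minimal ``atomic'' obstructions (e.g.\ $P_5$, $2P_2$, or a sufficiently large independent set), and invoke known \NP-hardness on the corresponding $H'$-free subclass; this propagates to $H$-free graphs since $H'\ssi H$ means $H'$-free $\subseteq$ $H$-free.

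The main obstacle is the linear-forest case analysis. The tractable linear forests are exactly the induced subgraphs of $P_4$ and of $P_1+P_3$; any other linear forest either has a component on at least five vertices (contains $P_5$), or has at least two components each on at least two vertices (contains $2P_2$), or belongs to a transitional family such as $sP_1+P_3$ with $s\ge 2$ or $sP_1+P_4$ with $s\ge 1$. For this last kind, no obvious classical hard subclass sits inside, and I expect to need tailored reductions that pad an existing hardness construction (say for $P_5$-free or $(P_1+P_3)$-free-like \textsc{Colouring}) with isolated vertices while carefully avoiding the creation of forbidden induced copies of $H$. Designing these reductions and confirming exhaustiveness of the overall case split is where essentially all the real work lies.
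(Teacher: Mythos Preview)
This theorem is quoted from~\cite{KKTW01} and the paper gives no proof of its own, so there is no in-paper argument to compare against; what follows evaluates your plan on its own merits.

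The tractable side is essentially right. For $(P_1+P_3)$-free graphs the phrase ``since all its cliques have size at most~$2$'' is garbled: the anticomponent~$G_i$ you are colouring has \emph{independent sets} of size at most~$2$ (its complement being triangle-free), and the relevant matching lives in~$\overline{G_i}$; but the algorithm itself is correct.

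On the hardness side the cycle and claw cases are standard and fine, but the linear-forest analysis has a genuine gap. The minimal linear forests that are induced subgraphs of neither~$P_4$ nor $P_1+P_3$ are exactly $4P_1$, $2P_1+P_2$ and~$2P_2$. Your list ($P_5$, $2P_2$, large independent set) is redundant --- $2P_2\ssi P_5$ --- and, more seriously, omits $2P_1+P_2$, which contains none of your proposed obstructions; your case split therefore fails already for $H=2P_1+P_2$. Conversely, the ``transitional'' families you flag as needing tailored reductions are not where the difficulty lies: every $sP_1+P_4$ with $s\ge 1$ and every $sP_1+P_3$ with $s\ge 2$ contains an induced $2P_1+P_2$, so hardness for these follows immediately from the $(2P_1+P_2)$-free case. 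The real work is to establish \NP-hardness of {\sc Colouring} on $4P_1$-free, on $(2P_1+P_2)$-free, and on $2P_2$-free graphs separately; these are the specific reductions carried out in~\cite{KKTW01}, and your plan neither isolates them correctly nor sketches how any of them would go.
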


\begin{figure}[h]
\begin{center}
\begin{tabular}{cc}
\begin{minipage}{0.33\textwidth}
\centering
\scalebox{0.75}{
{\begin{tikzpicture}[scale=1]
\GraphInit[vstyle=Simple]
\SetVertexSimple[MinSize=6pt]
\Vertex[x=0,y=1]{a}
\Vertex[x=1,y=1]{b}
\Vertex[x=2,y=1]{c}
\Vertex[x=3,y=1]{d}
\Edges(b,c,d)
\end{tikzpicture}}}
\end{minipage}
&
\begin{minipage}{0.33\textwidth}
\centering
\scalebox{0.75}{
{\begin{tikzpicture}[scale=1]
\GraphInit[vstyle=Simple]
\SetVertexSimple[MinSize=6pt]
\Vertex[x=0,y=1]{a}
\Vertex[x=1,y=1]{b}
\Vertex[x=2,y=1]{c}
\Vertex[x=3,y=1]{d}
\Edges(a,b,c,d)
\end{tikzpicture}}}
\end{minipage}\\
\\
$P_1+\nobreak P_3$ & $P_4$\\
\end{tabular}
\end{center}
\caption{\label{fig:P1+P3-P4}The graphs~$H$ such that {\sc Colouring} can be solved in polynomial time on $H$-free graphs.}
\end{figure}
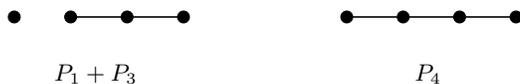

\noindent
We study the complexity of {\sc Colouring} and {\sc Clique Cover} for $(H_1,H_2)$-free graphs, that is, when two induced subgraphs~$H_1$ and~$H_2$ are both forbidden.
Both polynomial-time and \NP-completeness results are known for $(H_1,H_2)$-free graphs, as shown by many teams of researchers (see, for example,~\cite{BGPS12a,CH,DDP15,DGP14,HH,HL,KKTW01,LM15,Ma13,Ma,ML17,Sc05}), but the complexity classification is far from complete:
even if we forbid two graphs~$H_1$ and~$H_2$ of up to four vertices, there are still three open cases left, namely when $(H_1,H_2)\in \{(K_{1,3},4P_1),\allowbreak (K_{1,3},2P_1+\nobreak P_2),\allowbreak (C_4,4P_1)\}$; see~\cite{LM15} (the graph~$K_{1,3}$ is the claw and~$C_4$ is the $4$-vertex cycle).
We refer to~\cite[Theorem 21]{GJPS} for a summary and to the recent paper~\cite{ML17}, in which the number of missing cases when~$H_1$ and~$H_2$ are both connected graphs on at most five vertices was reduced from ten to eight.

To narrow the complexity gap between hard and easy cases and to increase our understanding of it, we initiate a systematic study into $(H_1,H_2)$-free graph classes that are {\em closed under complementation}: 
if a graph~$G$ belongs to the class, then so does its {\it complement}~$\overline{G}$, which is the graph with vertex set~$V(G)$ and an edge between two distinct vertices if and only if these two vertices are not adjacent in~$G$.
For such graph classes the complexities of {\sc Colouring} and {\sc Clique Cover} coincide, so
we only need to consider the {\sc Colouring} problem. 
A graph~$H$ is {\em self-complementary} if~$H=\overline{H}$.
We observe that a class of $(H_1,H_2)$-free graphs is closed under complementation if either~$H_1$ and~$H_2$ are both self-complementary, or $H_2=\overline{H_1}$.
For the first case we prove the following more general classification in Section~\ref{s-self}.

\begin{sloppypar}
\begin{theorem}\label{t-self}
Let $H_1,\ldots,H_k$ be self-complementary graphs.
Then {\sc Colouring} is polynomial-time solvable for $(H_1,\ldots,H_k)$-free graphs if $H_i \ssi P_4$ for some~$i$ and it is \NP-complete otherwise.
\end{theorem}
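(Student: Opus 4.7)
The plan is to split on the two directions of the dichotomy. The polynomial-time side will reduce directly to Theorem~\ref{t-col-H-free} after identifying the self-complementary induced subgraphs of $P_4$, while the $\NP$-completeness side will be handled uniformly by a large-girth construction.

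For the polynomial-time case, I first enumerate the self-complementary graphs on at most four vertices. A self-complementary graph on $n$ vertices has $n(n-1)/4$ edges, so $n \in \{1,4\}$ when $n \le 4$; checking the two possibilities shows that the only self-complementary $H$ with $H \ssi P_4$ are $P_1$ and $P_4$ themselves. Hence if some $H_i \ssi P_4$, then $H_i \in \{P_1, P_4\}$, and the class of $(H_1,\ldots,H_k)$-free graphs is contained in the class of $H_i$-free graphs. Theorem~\ref{t-col-H-free} then yields a polynomial-time algorithm.

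For the $\NP$-completeness direction, assume that no $H_i$ is an induced subgraph of $P_4$. The same edge-count argument shows that a self-complementary forest must satisfy $n(n-1)/4 \le n-1$, so $n \le 4$, which forces it to be $P_1$ or $P_4$. Consequently every $H_i$ under consideration contains a cycle of length at most $|V(H_i)|$. Setting $g := 1 + \max_i |V(H_i)|$, any graph of girth at least $g$ is $(H_1,\ldots,H_k)$-free, since an induced copy of some $H_i$ would import one of its cycles into the host graph and pull its girth below~$g$. I then invoke the classical theorem of Emden-Weinert, Hougardy, and Kreuter that {\sc Colouring} remains $\NP$-complete on graphs of girth at least $g$, for every fixed $g \ge 3$, to conclude $\NP$-completeness on our class.

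The main idea, and really the only non-mechanical step, is the observation that the self-complementarity of each $H_i$ together with $H_i$ not being an induced subgraph of $P_4$ forces each $H_i$ to contain a cycle. This lets one dispose of the entire forbidden list with a single girth bound, removing any need for case analysis on the structure of the individual $H_i$. Everything else is a direct appeal to Theorem~\ref{t-col-H-free} and to the known large-girth $\NP$-hardness result.
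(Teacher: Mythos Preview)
Your proof is correct and follows essentially the same route as the paper: both arguments hinge on the observation that a self-complementary graph is either an induced subgraph of $P_4$ (namely $P_1$ or $P_4$) or contains a cycle, obtained via the edge-count $n(n-1)/4$, and then invoke Theorem~\ref{t-col-H-free} for the tractable side and the large-girth $\NP$-hardness result (the paper's Lemma~\ref{lem:col-cycle-free}) for the hard side. The only cosmetic difference is that you phrase the cycle argument as ``a self-complementary forest has $n(n-1)/4\le n-1$, hence $n\le 4$'', whereas the paper writes the contrapositive ``for $n\ge 5$ we have $n(n-1)/4\ge n$, so the graph is not a forest''.
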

\end{sloppypar}

\begin{sloppypar}
Hence, we may now focus on the case when $H_2=\overline{H_1}$ and~$H_1$ is not self-complementary.
In this case few results are known.
First notice that for any integer $s\geq 1$, the class of $(sP_1,\overline{sP_1})$-free graphs consists of graphs with no large independent set and no large clique.
The number of vertices in such graphs is bounded by a constant by Ramsey's Theorem.
Dabrowski et al.~\cite{DGP14} researched the effect on the complexity of changing the forbidden subgraph~$sP_1$ by adding an extra edge, and proved that {\sc Colouring} is polynomial-time solvable for $(sP_1+P_2,\overline{sP_1+P_2})$-free graphs. 
A result of Malyshev~\cite{Ma13} implies that {\sc Colouring} is polynomial-time solvable for $(K_{1,3},\overline{K_{1,3}})$-free graphs.
Ho\`ang and Lazzarato~\cite{HL} proved the same for $(P_5,\overline{P_5})$-free graphs.
By combining results of~\cite{KR03b} and~\cite{OS06}, {\sc Colouring} is polynomial-time solvable on graph classes of bounded clique-width.
It is known that $(P_1+\nobreak P_4,\allowbreak \overline{P_1+P_4})$-free graphs~\cite{BLM04} and $(2P_1+\nobreak P_3,\allowbreak \overline{2P_1+P_3})$-free graphs~\cite{BDJLPZ} have bounded clique-width, so {\sc Colouring} is polynomial-time solvable for these two graph classes. 
In light of these results, it is a natural research question to find out to what extent we may add edges to~$sP_1$ so that {\sc Colouring} 
remains polynomial-time solvable on $(H,\overline{H})$-free graphs for the resulting~graph~$H$.
\end{sloppypar}

To narrow down the large number of open cases $(H,\overline{H})$, we will use the known results that {\sc Colouring} is \NP-complete for graphs of girth at least~$p$ for any $p\geq\nobreak 3$ \cite{EHK98,KKTW01,KL07} (the girth of a graph is the length of a shortest induced cycle in it) and for $(K_{1,3},K_4,\overline{2P_1+P_2})$-free graphs~\cite{KKTW01}.
Even after using these results there are still many open cases left, and to get a handle on them we need another \NP-hardness result for a very restricted graph class.
In Section~\ref{sec:hard} we will prove such a result, namely, we show that {\sc Colouring} is \NP-complete even for $(P_1+\nobreak 2P_2,\allowbreak \overline{P_1+2P_2},\allowbreak 2P_3,\allowbreak \overline{2P_3},\allowbreak P_6,\overline{P_6})$-free graphs (see \figurename~\ref{fig:npc}).
We will show that, by combining this result with those above, only the following open cases are left: $H=P_2+\nobreak P_3$ (see \figurename~\ref{fig:H-co-H-unknown-colouring}), $H=(s+1)P_1+P_3$ or $H=sP_1+P_4$ for $s\geq 2$.
In Section~\ref{sec:easy} we give a polynomial-time algorithm for {\sc Colouring} restricted to the class of $(P_2+\nobreak P_3,\allowbreak \overline{P_2+P_3})$-free graphs (note that this class has unbounded clique-width, as it contains the class of 
split graphs, which have unbounded clique-width~\cite{MR99}).

\begin{figure}
\begin{center}
\begin{tabular}{cccccc}
\begin{minipage}{0.15\textwidth}
\centering
\scalebox{0.75}{
{\begin{tikzpicture}[scale=1,rotate=90]
\GraphInit[vstyle=Simple]
\SetVertexSimple[MinSize=6pt]
\Vertices{circle}{a,b,c,d,e}
\Edges(b,c)
\Edges(d,e)
\end{tikzpicture}}}
\end{minipage}
&
\begin{minipage}{0.15\textwidth}
\centering
\scalebox{0.75}{
{\begin{tikzpicture}[scale=1,rotate=90]
\GraphInit[vstyle=Simple]
\SetVertexSimple[MinSize=6pt]
\Vertices{circle}{a,b,c,d,e}
\Edges(b,c,d,e,b)
\Edges(b,a,c)
\Edges(d,a,e)
\end{tikzpicture}}}
\end{minipage}
&
\begin{minipage}{0.15\textwidth}
\centering
\scalebox{0.75}{
{\begin{tikzpicture}[scale=1]
\GraphInit[vstyle=Simple]
\SetVertexSimple[MinSize=6pt]
\Vertices{circle}{a,b,c,d,e,f}
\Edges(c,d,e)
\Edges(f,a,b)
\end{tikzpicture}}}
\end{minipage}
&
\begin{minipage}{0.15\textwidth}
\centering
\scalebox{0.75}{
{\begin{tikzpicture}[scale=1]
\GraphInit[vstyle=Simple]
\SetVertexSimple[MinSize=6pt]
\Vertices{circle}{a,b,c,d,e,f}
\Edges(a,d,e,f,a,e)
\Edges(f,d)
\Edges(b,c)
\Edges(a,b,f)
\Edges(d,c,e)
\end{tikzpicture}}}
\end{minipage}
&
\begin{minipage}{0.15\textwidth}
\centering
\scalebox{0.75}{
{\begin{tikzpicture}[scale=1]
\GraphInit[vstyle=Simple]
\SetVertexSimple[MinSize=6pt]
\Vertices{circle}{a,b,c,d,e,f}
\Edges(c,d,e,f,a,b)
\end{tikzpicture}}}
\end{minipage}
&
\begin{minipage}{0.15\textwidth}
\centering
\scalebox{0.75}{
{\begin{tikzpicture}[scale=1]
\GraphInit[vstyle=Simple]
\SetVertexSimple[MinSize=6pt]
\Vertices{circle}{a,b,c,d,e,f}
\Edges(f,a,b,f)
\Edges(c,d,e,c)
\Edges(a,d)
\Edges(b,c,f,e)
\end{tikzpicture}}}
\end{minipage}\\
\\
$P_1+\nobreak 2P_2$ &
$\overline{P_1+2P_2}$&
$2P_3$&
$\overline{2P_3}$&
$P_6$&
$\overline{P_6}$
\end{tabular}
\end{center}
\caption{\label{fig:npc}The six graphs corresponding to our result that {\sc Colouring} is \NP-complete for $(P_1+\nobreak 2P_2,\allowbreak \overline{P_1+2P_2},\allowbreak 2P_3,\allowbreak \overline{2P_3},\allowbreak P_6,\overline{P_6})$-free graphs.}
\end{figure}
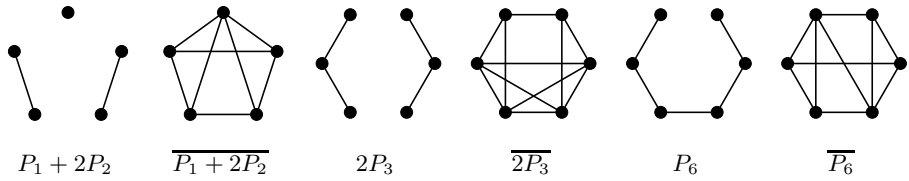

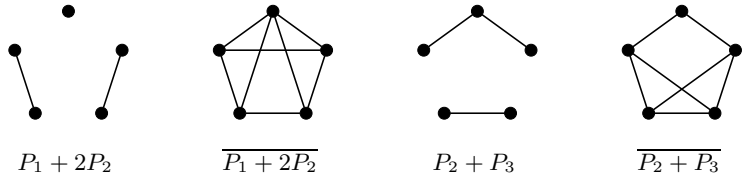
\begin{figure}
\begin{center}
\begin{tabular}{cccccc}
\begin{minipage}{0.20\textwidth}
\centering
\scalebox{0.75}{
{\begin{tikzpicture}[scale=1,rotate=90]
\GraphInit[vstyle=Simple]
\SetVertexSimple[MinSize=6pt]
\Vertices{circle}{a,b,c,d,e}
\Edges(b,c)
\Edges(d,e)
\end{tikzpicture}}}
\end{minipage}
&
\begin{minipage}{0.20\textwidth}
\centering
\scalebox{0.75}{
{\begin{tikzpicture}[scale=1,rotate=90]
\GraphInit[vstyle=Simple]
\SetVertexSimple[MinSize=6pt]
\Vertices{circle}{a,b,c,d,e}
\Edges(b,c,d,e,b)
\Edges(b,a,c)
\Edges(d,a,e)
\end{tikzpicture}}}
\end{minipage}
&
\begin{minipage}{0.20\textwidth}
\centering
\scalebox{0.75}{
{\begin{tikzpicture}[scale=1,rotate=90]
\GraphInit[vstyle=Simple]
\SetVertexSimple[MinSize=6pt]
\Vertices{circle}{a,b,c,d,e}
\Edges(e,a,b)
\Edges(c,d)
\end{tikzpicture}}}
\end{minipage}
&
\begin{minipage}{0.20\textwidth}
\centering
\scalebox{0.75}{
{\begin{tikzpicture}[scale=1,rotate=90]
\GraphInit[vstyle=Simple]
\SetVertexSimple[MinSize=6pt]
\Vertices{circle}{a,b,c,d,e}
\Edges(e,a,b)
\Edges(b,c,e,d,b)
\Edges(c,d)
\end{tikzpicture}}}
\end{minipage}\\
\\
$P_1+\nobreak 2P_2$ &
$\overline{P_1+\nobreak 2P_2}$ &
$P_2+\nobreak P_3$ &
$\overline{P_2+\nobreak P_3}$
\end{tabular}
\end{center}
\caption{\label{fig:H-co-H-unknown-colouring}The graphs~$H$ on five vertices for which the complexity of {\sc Colouring} for $(H,\overline{H})$-free graphs was unknown. 
We show \NP-completeness if $H \in \{P_1+\nobreak 2P_2,\overline{P_1+\nobreak 2P_2}\}$ and polynomial-time solvability if $H \in \{P_2+\nobreak P_3,\overline{P_2+P_3}\}$.}
\end{figure}

In Section~\ref{s-final} we show that our new hardness result and new polynomial-time algorithm combine together with the aforementioned known results to give us the main theorem of our paper.

\begin{sloppypar}
\begin{theorem}\label{t-mainmain}
Let~$H$ be a graph with $H,\overline{H} \notin \{(s+\nobreak 1)P_1+\nobreak P_3, sP_1+\nobreak P_4\; |\; s\geq 2\}$.
Then {\sc Colouring} is polynomial-time solvable for $(H,\overline{H})$-free graphs if
$H$ or $\overline{H}$ is an induced subgraph of $K_{1,3}$,
$P_1+\nobreak P_4$,
$2P_1+\nobreak P_3$,
$P_2+\nobreak P_3$,
$P_5$, or $sP_1+\nobreak P_2$ for some $s\geq 0$
and it is \NP-complete otherwise.
\end{theorem}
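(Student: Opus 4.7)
The proof is essentially a completeness/dichotomy argument that glues together the various polynomial-time algorithms and \NP-completeness results collected (or proved) earlier in the paper. The plan is to split the space of graphs $H$ according to induced subgraph containment, verify that each ``easy'' island is handled by an already cited polynomial-time algorithm, and then show that everything else outside the excluded family $\{(s+1)P_1+\nobreak P_3,\,sP_1+\nobreak P_4 \mid s\geq 2\}$ is forced into one of the known \NP-hard regimes.

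For the tractable side, I would simply collect the ingredients: if $H$ or $\overline{H}$ is an induced subgraph of $K_{1,3}$ apply Malyshev's algorithm~\cite{Ma13}; if an induced subgraph of $P_1+\nobreak P_4$ or $2P_1+\nobreak P_3$, invoke the bounded clique-width results of~\cite{BLM04,BDJLPZ} combined with~\cite{KR03b,OS06}; if an induced subgraph of $P_5$, use Ho\`ang and Lazzarato~\cite{HL}; if an induced subgraph of $sP_1+\nobreak P_2$, use Dabrowski et al.~\cite{DGP14}; and if an induced subgraph of $P_2+\nobreak P_3$, use the new algorithm of Section~\ref{sec:easy}. Since each such class is closed under complementation, being $(H,\overline{H})$-free follows from being $H$-free or $\overline{H}$-free in these cases, so each algorithm applies directly.

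For the intractable side, the plan is a case analysis on the structure of $H$ (taking $H$ to be the one of $H,\overline{H}$ with at least as many edges, say). First I would dispose of the cases where $H$ contains a cycle: if $H$ contains an induced cycle $C_k$ for some $k\geq 3$, then $(H,\overline{H})$-free graphs of girth $> |V(H)|$ form a subclass on which \NP-completeness follows from the girth hardness results~\cite{EHK98,KKTW01,KL07}, unless $\overline{H}$ also forbids such graphs, in which case a Ramsey-type argument bounds the class trivially or pushes us into the $(K_{1,3},K_4,\overline{2P_1+P_2})$-free hardness of~\cite{KKTW01}. Next I would handle the triangle-free (linear forest) case: if $H$ is a disjoint union of paths, then failing the tractable conditions forces $H$ to contain either $2P_2$, $P_6$, or $2P_3$ as an induced subgraph (and symmetrically for $\overline{H}$); in any of these situations, every $(H,\overline{H})$-free graph is in particular $(P_1+\nobreak 2P_2,\overline{P_1+\nobreak 2P_2},2P_3,\overline{2P_3},P_6,\overline{P_6})$-free, so our new \NP-completeness result from Section~\ref{sec:hard} applies. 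The remaining open components $(s+1)P_1+\nobreak P_3$ and $sP_1+\nobreak P_4$ for $s\geq 2$ are exactly those excluded in the hypothesis.

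The routine part is the bookkeeping of induced-subgraph containments among small graphs on up to five or six vertices; I would tabulate which of the forbidden-target graphs from Section~\ref{sec:hard} sit inside each surviving $H$ and its complement. The main obstacle is making the case split exhaustive: one must check that every linear forest $H$ satisfying neither $H\ssi sP_1+\nobreak P_2$ nor $H\ssi 2P_1+\nobreak P_3$ nor $H\ssi P_1+\nobreak P_4$ nor $H\ssi P_2+\nobreak P_3$ nor $H\ssi P_5$, and that is not among the excluded $(s+1)P_1+\nobreak P_3$ or $sP_1+\nobreak P_4$ family, necessarily contains $2P_2$, $2P_3$, $P_1+\nobreak 2P_2$, or $P_6$ as an induced subgraph (up to taking a complement). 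This is a finite-type combinatorial verification, but it is where care is needed so that no small linear forest slips through the net.
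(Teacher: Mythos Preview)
Your high-level strategy matches the paper's: collect the polynomial cases, reduce~$H$ (or~$\overline{H}$) to a linear forest, and then invoke the hardness result of Section~\ref{sec:hard}. However, three concrete errors would derail the argument as written.

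\textbf{The inclusion is stated backwards.} You write that ``every $(H,\overline{H})$-free graph is in particular $(P_1+\nobreak 2P_2,\ldots,P_6,\overline{P_6})$-free''. For \NP-hardness to transfer you need the opposite: the hard class of $(P_1+\nobreak 2P_2,\overline{P_1+2P_2},2P_3,\overline{2P_3},P_6,\overline{P_6})$-free graphs must be a \emph{subclass} of $(H,\overline{H})$-free graphs. That holds precisely when~$H$ contains one of $P_1+\nobreak 2P_2$, $2P_3$, $P_6$ as an induced subgraph (and hence~$\overline{H}$ contains one of their complements).

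\textbf{You use $2P_2$ where $P_1+\nobreak 2P_2$ is needed.} You claim that failing the tractable conditions forces~$H$ to contain $2P_2$, $P_6$, or $2P_3$. But $P_5$ and $P_2+\nobreak P_3$ both contain $2P_2$ and are tractable, so containing $2P_2$ alone yields nothing. The paper's case analysis shows that a linear forest~$H$ outside the tractable and excluded families must contain one of $P_1+\nobreak 2P_2$, $2P_3$, or $P_6$; these are exactly the three graphs appearing in Theorem~\ref{thm:P1+2P2-2P3-P6-and-complements-free-NPC}.

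\textbf{The degree-$3$ forest case is skipped.} You pass directly from ``$H$ contains a cycle'' to ``$H$ is a linear forest'', but a forest with a vertex of degree~$\geq 3$ (e.g.\ $K_{1,3}+\nobreak P_1$ or $S_{1,1,2}$) is neither. The paper handles this explicitly: if $K_{1,3}\ssi H$ and~$H$ has any further vertex~$x$, then by considering the adjacencies of~$x$ one finds $K_4\ssi\overline{H}$ or $\overline{2P_1+P_2}\ssi\overline{H}$, so $(H,\overline{H})$-free graphs contain the $(K_{1,3},K_4,\overline{2P_1+P_2})$-free graphs, on which {\sc Colouring} is \NP-complete~\cite{KKTW01}. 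You do cite this hardness result, but you place it in the cycle branch rather than here, where it is actually needed. Relatedly, the cycle step is cleaner than you suggest: if \emph{both}~$H$ and~$\overline{H}$ contain an induced cycle then large-girth graphs are $(H,\overline{H})$-free and Lemma~\ref{lem:col-cycle-free} gives \NP-completeness directly; otherwise one of them is a forest and we take that one as~$H$.
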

\end{sloppypar}

Theorem~\ref{t-mainmain} tells us is that we may only add a few edges to the graph $sP_1$ if {\sc Colouring} stays polynomial-time solvable on $(H,\overline{H})$-free graphs for the resulting graph~$H$ (even in the two missing cases where $H=(s+1)P_1+\nobreak P_3$ or $H=sP_1+\nobreak P_4$ for $s \geq 2$, the graph~$H$ is only allowed to contain at most three edges).
From a more general perspective, our new hardness result has significantly narrowed the classification for $(H_1,H_2)$-free graphs.

Our immediate goal is to complete the complexity classification of {\sc Colouring} for $(H,\overline{H})$-free graphs, thus to solve the cases when $H=(s+1)P_1+\nobreak P_3$ or $H=sP_1+\nobreak P_4$ for $s\geq 2$.
We note that the class of $(3P_1+\nobreak P_2,K_4)$-free graphs, and thus the class of $(3P_1+\nobreak P_3,\allowbreak \overline{3P_1+P_3}$)-free graphs, has unbounded clique-width~\cite{DGP14}.
We emphasize that our long-term goal is to increase our understanding of the computational complexity of {\sc Colouring} for hereditary graph classes.
Another natural question is whether $k$-{\sc Colouring} (the variant of {\sc Colouring} where the number of colours is fixed) is polynomial-time solvable for $(H,\overline{H})$-free graphs when $H=(s+1)P_1+\nobreak P_3$ or $H=sP_1+\nobreak P_4$ for $s\geq 2$.
This is indeed the case, as Couturier et al.~\cite{CGKP15} extended the result of~\cite{HKLSS10} on $k$-{\sc Colouring} for $P_5$-free graphs by proving that for every pair of integers $k,s\geq 1$, {\sc $k$-Colouring} is polynomial-time solvable even for $(sP_1+\nobreak P_5)$-free graphs.
However, for other classes of $(H,\overline{H})$-free graphs, we show that $k$-{\sc Colouring} turns out to be \NP-hard by using a construction of Huang~\cite{Hu16}, that is, we show the following results in Section~\ref{sec:k-col}.

\begin{theorem}\label{thm:4-col-overP8}
$4$-{\sc Colouring} is \NP-complete for $(P_7,\overline{P_8})$-free graphs, and thus for $(P_8,\overline{P_8})$-free graphs.
\end{theorem}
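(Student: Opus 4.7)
The plan is to build on the construction of Huang~\cite{Hu16}, who showed that $4$-{\sc Colouring} is \NP-complete on $P_7$-free graphs. Membership in \NP{} is immediate, since a valid $4$-colouring serves as a polynomial-time verifiable certificate. For hardness, the strategy is to take Huang's reduction (from an \NP-complete problem such as NAE-$3$-SAT) and verify that the graph it produces has the additional property of being $\overline{P_8}$-free. This way, a single construction witnesses hardness in the richer class $(P_7,\overline{P_8})$-free graphs.

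The main task is therefore a structural verification. The graph in Huang's construction is assembled from a bounded-size family of variable and clause gadgets linked together by a small number of global vertices. To rule out an induced $\overline{P_8}$, I would first recall that $\overline{P_8}$ is a dense graph: it contains a $4$-clique (on the vertices $v_1,v_3,v_5,v_7$ of the underlying $P_8$) and has only $7$ non-edges in total, namely the edges of $P_8$. I would then enumerate the cliques of size at least~$4$ in Huang's construction and, for each such candidate, show that no extension to eight vertices realising the non-edge pattern of $\overline{P_8}$ is possible without either creating an induced $P_7$ (which Huang has already excluded) or contradicting the adjacency rules of the gadgets.

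The main obstacle I anticipate is that $\overline{P_8}$-freeness is subtler to certify than $P_7$-freeness, because a putative induced $\overline{P_8}$ may stretch across several gadgets joined by the reduction's connection edges. I expect to handle this by exploiting the rigid clique structure of Huang's gadgets: any large clique is localised to a single gadget or to a common neighbourhood of a global vertex, which sharply constrains where the vertices of an induced $\overline{P_8}$ could lie. The case analysis should then reduce to checking a short list of candidate vertex sets, each ruled out by a direct adjacency check; where it is not, a minor local modification of the gadget (for instance subdividing a connection edge by an auxiliary vertex with a carefully chosen neighbourhood among the global vertices) should destroy the offending complement-path while preserving both $P_7$-freeness and the correctness of the reduction.

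The second claim of the theorem is immediate from the first: since every induced $P_8$ contains an induced $P_7$, every $P_7$-free graph is $P_8$-free, and so the class of $(P_7,\overline{P_8})$-free graphs is contained in the class of $(P_8,\overline{P_8})$-free graphs. Consequently, the \NP-completeness of $4$-{\sc Colouring} on the former class directly transfers to the latter.
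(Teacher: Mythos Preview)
Your high-level strategy---take Huang's reduction for $P_7$-free graphs and verify the resulting graph is also $\overline{P_8}$-free---is exactly what the paper does, and your derivation of the second claim from the first is correct. But the proposal is a plan rather than a proof, and it misses the observation that makes the verification short. You pivot on the $K_4$ inside $\overline{P_8}$ and anticipate a delicate case analysis, even hedging that a ``minor local modification of the gadget'' may be needed. No modification is needed, and the $K_4$ viewpoint is not the efficient one.

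The paper uses Huang's specific nice $3$-critical graph $H=C_7$ (the reduction is from $3$-\textsc{Sat}, not NAE-$3$-\textsc{Sat}), and the key structural fact is that $\overline{P_8}$ has independence number~$2$ (equivalently, $P_8$ is triangle-free); in particular any three vertices of an induced $\overline{P_8}$ span an edge and any four span at least three edges. In $G_{C_7,I}$ the $X$-type and $D$-type vertices together carry only the matching $\{x_i\overline{x_i}\}$, so at most three of them can lie in an induced $\overline{P_8}$; hence at least five vertices come from the clause copies of $C_7$. Distinct copies are pairwise anticomplete, so these five must lie in a single copy (otherwise one finds an independent triple), and any five vertices of $C_7$ already contain an independent triple. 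That is the entire argument. Your description of the construction as having ``a small number of global vertices'' is also off: the $U$-type vertices are complete to all $X$- and $D$-type vertices, but there are four of them per clause, so they are not a bounded set; fortunately this plays no role once you use the independence-number observation.
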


\begin{theorem}\label{thm:5-col-overP8}
$5$-{\sc Colouring} is \NP-complete for $(P_6,\overline{P_1+P_6})$-free graphs, and thus for $(P_1+\nobreak P_6,\overline{P_1+P_6})$-free graphs.
\end{theorem}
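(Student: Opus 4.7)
The plan is to use the construction of Huang \cite{Hu16}, which established that $5$-{\sc Colouring} is \NP-complete on $P_6$-free graphs. From this starting point, the goal is to verify (and, if necessary, slightly adapt) the reduction so that the graphs it produces are additionally $\overline{P_1+\nobreak P_6}$-free, while preserving both $P_6$-freeness and the encoding of satisfiability as $5$-colourability.

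A key observation driving the verification is that $\overline{P_1+\nobreak P_6}$ consists of a universal vertex $u$ together with six further vertices that induce a copy of $\overline{P_6}$. Therefore, forbidding $\overline{P_1+\nobreak P_6}$ as an induced subgraph is equivalent to requiring that for every vertex $v$ of the constructed graph, no six vertices of $N(v)$ induce a $\overline{P_6}$. Since $\overline{P_6}$ is dense (ten edges on six vertices and containing a $K_3$), this condition is quite restrictive on the local structure. The plan is to classify the vertices of Huang's variable and clause gadgets, and, for each type, perform a finite case analysis of the neighbourhoods that arise both within a gadget and across gadget interfaces, ruling out an induced $\overline{P_6}$ in every case. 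Should any vertex type fail the check, the remedy is to modify the offending gadget locally — e.g.\ by splitting a vertex into a small controlled set, or by attaching a pendant structure — in a way that neither creates an induced $P_6$ nor changes which assignments extend to valid $5$-colourings.

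The main obstacle is exactly this case analysis: Huang's construction involves several vertex classes and the densest neighbourhoods arise at the boundaries between gadgets, so checking that no six mutually-interacting neighbours line up into a $\overline{P_6}$ is the delicate point. Once this verification (possibly with local modifications) is complete, the first part of the theorem is immediate. The second part then follows for free, since every $P_6$-free graph is also $(P_1+\nobreak P_6)$-free (because $P_6 \ssi P_1+\nobreak P_6$), so the class of $(P_6, \overline{P_1+\nobreak P_6})$-free graphs is contained in the class of $(P_1+\nobreak P_6, \overline{P_1+\nobreak P_6})$-free graphs, and \NP-hardness transfers upwards.
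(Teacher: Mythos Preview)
Your proposal is correct and follows essentially the same approach as the paper: use Huang's nice $4$-critical gadget $H$ unchanged, invoke Lemmas~\ref{lem:huang1} and~\ref{lem:huang2} for the reduction and $P_6$-freeness, then verify $\overline{P_1+P_6}$-freeness by fixing a would-be universal vertex $q$ and doing a case analysis on its type ($D$/$X$-type, $C$-type, or $U$-type, the last splitting further by which vertex of $H$ plays the role of $q$). No local modification of the gadget turns out to be necessary, and your derivation of the second statement from the first via $P_6 \ssi P_1+P_6$ is exactly what the paper does.
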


As {\sc Colouring} is polynomial-time solvable for $(P_5,\overline{P_5})$-free graphs~\cite{HL}, it would be interesting to
solve the following two open problems (see also Table~\ref{tbl:pt-co-pt-col}):
\begin{itemize}
\item is there an integer $k$ such that $k$-{\sc Colouring} for $(P_6,\overline{P_6})$-free graphs is \NP-complete?
\item is there an integer~$k$ such that $k$-{\sc Colouring} for $(P_7,\overline{P_7})$-free graphs is \NP-complete?
\end{itemize}
As $3$-{\sc Colouring} is polynomial-time solvable for $P_7$-free graphs~\cite{BCMSZ}, $k$ must be at least~$4$.
Similar to summaries for $k$-{\sc Colouring} for $P_t$-free graphs (see, for example,~\cite{GJPS}) and $(C_s,P_t)$-free graphs~\cite{HH}, we can survey the known results and the missing cases of $k$-{\sc Colouring} for $(P_t,\overline{P_t})$-free graphs; see Table~\ref{tbl:pt-co-pt-col}.

\begin{table}[h]
\centering
\begin{tabular}{r|cccc}
$t\backslash k$      & $\leq 2$ & $3$     & $\geq 4$\\
\hline
$\leq 5$             & P        & P       & P       \\
$6$                  & P        & P       & ?       \\
$7$                  & P        & P       & ?       \\
$\geq 8$             & P        & ?       & \NP-c   \\\\
\end{tabular}
\caption{\label{tbl:pt-co-pt-col}The complexity of $k$-{\sc Colouring} $(P_t,\overline{P_t})$-free graphs for fixed values of~$k$ and~$t$.
Here, P means polynomial-time solvable and \NP-c means \NP-complete.
The entries in this table originate from Theorem~\ref{thm:4-col-overP8} and the following two results: $k$-{\sc Colouring} is polynomial-time solvable for $P_5$-free graphs for any $k\geq 1$~\cite{HKLSS10} and $3$-{\sc Colouring} is polynomial-time solvable for $P_7$-free graphs~\cite{BCMSZ}.}
\end{table}

\section{Preliminaries}\label{sec:prelim}

Throughout our paper we consider only finite, undirected graphs without multiple edges or self-loops.
The {\em disjoint union} $(V(G)\cup V(H), E(G)\cup E(H))$ of two vertex-disjoint graphs~$G$ and~$H$ is denoted by~$G+\nobreak H$ and the disjoint union of~$r$ copies of a graph~$G$ is denoted by~$rG$.
For a subset $S\subseteq V(G)$, we let~$G[S]$ denote the subgraph of~$G$ {\it induced} by~$S$, which has vertex set~$S$ and edge set $\{uv\; |\; u,v\in S, uv\in E(G)\}$.
If $S=\{s_1,\ldots,s_r\}$ then we may write $G[s_1,\ldots,s_r]$ instead of $G[\{s_1,\ldots,s_r\}]$.
We may also write $G \setminus S$ instead of $G[V(G)\setminus S]$.

For a set of graphs $\{H_1,\ldots,H_p\}$, a graph~$G$ is {\em $(H_1,\ldots,H_p)$-free} if it has no induced subgraph isomorphic to a graph in $\{H_1,\ldots,H_p\}$;
if~$p=1$, we may write $H_1$-free instead of $(H_1)$-free.
For a graph~$G$, the set $N(u)=\{v\in V\; |\; uv\in E\}$ denotes the {\em neighbourhood} of $u\in V(G)$.

The graphs $C_r$, $K_r$, $K_{1,r-1}$ and~$P_r$ denote the cycle, complete graph, star and path on~$r$ vertices, respectively.
The graph $\overline{P_1+2P_2}$ is also known as the $5$-vertex {\em wheel}.
The graphs~$K_3$ and~$K_{1,3}$ are also called the {\em triangle} and {\em claw}, respectively.
A graph~$G$ is a {\em linear forest} if every component of~$G$ is a path (on at least one vertex).
The graph~$S_{h,i,j}$, for $1\leq h\leq i\leq j$, denotes the {\em subdivided claw}, that is, the tree that has only one vertex~$x$ of degree~$3$ and
exactly three leaves, which are of distance~$h$,~$i$ and~$j$ from~$x$, respectively.
Observe that $S_{1,1,1}=K_{1,3}$.

The \emph{chromatic number}~$\chi(G)$ of a graph~$G$ is the smallest integer~$k$ such that~$G$ is $k$-colourable.
The \emph{clique number}~$\omega(G)$ is the size of a largest clique in~$G$.
A graph is {\em bipartite} if its vertex set can be partitioned into two (possibly empty) independent sets.

\section{The Proof of Theorem~\ref{t-self}}\label{s-self}

\noindent
As mentioned in Section~\ref{s-intro}, if we study the complexity of {\sc Colouring} for $(H_1,H_2)$-free classes of graphs that are closed under complementation, it is sufficient to consider the case where~$H_1$ and~$H_2$ are not self-complementary and $H_2=\overline{H_1}$. 
We will give a short proof for this claim.
To do so, we will need the following lemma.

\begin{lemma}[\cite{EHK98,KKTW01,KL07}]\label{lem:col-cycle-free}
For any integer $k\geq 3$, {\sc Colouring} is \NP-complete for $(C_3,C_4,\ldots,C_k)$-free graphs.
\end{lemma}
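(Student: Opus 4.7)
The plan is to prove the lemma by polynomial-time reduction from a known \NP-complete variant of {\sc Colouring}, for instance ordinary {\sc Colouring} on general graphs, which is \NP-complete already for $k=3$ by Lovász's classical reduction (reference \cite{Lo73} in the excerpt). Membership in \NP{} is immediate, since a candidate colouring can be verified in polynomial time.

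For hardness, given an input graph $G$ on which one wishes to decide $\chi(G)\leq r$, I would construct a graph $G'$ of girth exceeding $k$ whose chromatic number equals (or is a fixed function of) that of $G$. The central device is an \emph{edge indicator} gadget: for each edge $uv$ of $G$, replace $uv$ by a subgraph $F_{uv}$ sharing exactly the two vertices $u,v$ with the rest of the construction, designed so that (a)~$F_{uv}$ alone has girth greater than $k$, and (b)~in every proper $r$-colouring of $F_{uv}$ the vertices $u$ and $v$ are forced to receive distinct colours. The final graph $G'$ is obtained by gluing all the $F_{uv}$ together at their designated endpoints, together with a copy of the vertex set of $G$.

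A robust way to realise such gadgets is to invoke the classical theorem of Erdős, together with explicit algebraic constructions such as the Ramanujan graphs of Lubotzky--Phillips--Sarnak, which yields polynomial-time constructible graphs of arbitrarily large girth and arbitrarily large chromatic number; picking two distinguished vertices of appropriate ``criticality'' inside such a graph yields the required indicator. A more elementary route, essentially the one taken in \cite{EHK98,KKTW01,KL07}, is to build $F_{uv}$ out of a long odd-cycle-like structure augmented with a small number of auxiliary vertices that pin down one colour class, so that the parity along the structure propagates a forced inequality between $u$ and $v$. In either case one verifies directly that $\chi(G')\leq r$ iff $\chi(G)\leq r$.

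The main obstacle will be to ensure that \emph{gluing} many gadgets at shared vertices does not create induced cycles of length at most $k$ that span several gadgets, while simultaneously preserving the bijection between $r$-colourings of $G$ and $r$-colourings of $G'$. This is typically handled by taking each $F_{uv}$ large enough relative to $k$ so that the shared endpoints have tree-like neighbourhoods up to distance $k$ in $G'$, and by checking that any would-be short induced cycle crossing gadget boundaries must use one of the distinguished vertices in a way that is excluded by the internal structure of the gadget. Once both checks are in place, the reduction is complete and polynomial, yielding \NP-completeness on $(C_3,C_4,\ldots,C_k)$-free graphs for every fixed $k\geq 3$.
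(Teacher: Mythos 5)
First, a point of comparison: the paper does not prove Lemma~\ref{lem:col-cycle-free} at all; it is imported from \cite{EHK98,KKTW01,KL07}, so there is no in-paper argument to match your sketch against. Judged on its own terms, your strategy is the standard one from those references: reduce from $3$-{\sc Colourability}, replace each edge of the source graph by a high-girth gadget meeting the rest of the construction only at its two terminals, and check that cycles crossing gadget boundaries are forced to be long. Your closing remarks on the gluing are the right bookkeeping (and note that girth $>k$ is stronger than $(C_3,\dots,C_k)$-freeness, so controlling all cycles rather than just induced ones indeed suffices).

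The genuine gap is the gadget itself, which is where the entire content of the lemma lives and which you assert rather than construct. Worse, the mechanism you point to gives the wrong forcing direction: if $H$ is an $(r+1)$-critical graph of girth $>k$ and $xy\in E(H)$, then $H-xy$ is $r$-colourable and every $r$-colouring of $H-xy$ assigns $x$ and $y$ the \emph{same} colour (otherwise it would properly colour $H$). So ``two distinguished vertices of appropriate criticality'' hands you an \emph{equality} gadget, not the inequality gadget $F_{uv}$ your reduction requires. One must compose an equality gadget with a single genuine edge (force $c(u)=c(u')$ and add the edge $u'v$) and then re-verify the girth of the composed gadget and of the whole graph, using the fact that the two terminals of the equality gadget are at distance at least girth$(H)-1$ apart; none of this is in your sketch. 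Two further loose ends: Erd\H{o}s's theorem is probabilistic, and large girth plus large chromatic number in an explicit family (Ramanujan graphs) is not by itself the two-terminal forcing property --- you still have to extract a critical subgraph and the right edge, which is the nontrivial step the cited papers actually carry out. And the reduction is only obviously polynomial if the number of colours is fixed (taking $r=3$ suffices, since then the gadget has constant size depending only on $k$); as phrased, ``deciding $\chi(G)\le r$'' for variable $r$ would need gadgets whose size you have not bounded in $r$.
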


\medskip
\noindent
\faketheorem{Theorem~\ref{t-self} (restated).}
{\it Let $H_1,\ldots,H_k$ be self-complementary graphs.
Then {\sc Colouring} is polynomial-time solvable for $(H_1,\ldots,H_k)$-free graphs if $H_i \ssi P_4$ for some~$i$ and it is \NP-complete otherwise.}

\begin{proof}
Let~$H$ be a self-complementary graph on~$n$ vertices.
Then~$H$ must have $\frac{1}{2}\binom{n}{2}$ edges.
If $n=1$ then $H=P_1$.
Now~$n$ cannot be~$2$ or~$3$, since $\frac{1}{2}\binom{2}{2}$ and $\frac{1}{2}\binom{3}{2}$ are not integers.
If $n=4$ then $H=P_4$, by inspection.
Suppose $n \geq 5$.
Then $\frac{1}{2}\binom{n}{2} = \frac{n(n-1)}{4} \geq n$, so~$H$ must contain a cycle.
Thus, if~$H$ is a self-complementary graph then it is either an induced subgraph of~$P_4$ or it contains a cycle.

Let $H_1,\ldots,H_k$ be self-complementary graphs.
If $H_i \ssi P_4$ for some~$i$, then the {\sc Colouring} problem for $(H_1,\ldots,H_k)$-free graphs is polynomial-time solvable by Theorem~\ref{t-col-H-free}.
If $H_i \not \ssi P_4$ for all~$i$, then each~$H_i$ must contain a cycle. In that case {\sc Colouring} for $(H_1,\ldots,H_k)$-free graphs is \NP-complete
by Lemma~\ref{lem:col-cycle-free}.\qedllncs
\end{proof}

\section{The Proof of Theorem~\ref{t-mainmain}}\label{a-b}

In this section we prove Theorem~\ref{t-mainmain}.
As part of the proof we first show that 
{\sc Colouring} is \NP-complete for $(P_1+\nobreak 2P_2,\allowbreak \overline{P_1+2P_2},\allowbreak 2P_3,\allowbreak \overline{2P_3},\allowbreak P_6,\overline{P_6})$-free graphs in Section~\ref{sec:hard} and polynomial-time solvable for $(P_2+\nobreak P_3,\allowbreak \overline{P_2+P_3})$-free graphs in Section~\ref{sec:easy}. 

\subsection{The Hardness Result}\label{sec:hard}

In this section we prove the following result.

\begin{theorem}\label{thm:P1+2P2-2P3-P6-and-complements-free-NPC}
{\sc Colouring} is \NP-complete for $(P_1+\nobreak 2P_2,\allowbreak \overline{P_1+2P_2},\allowbreak 2P_3,\allowbreak \overline{2P_3},\allowbreak P_6,\overline{P_6})$-free graphs.
\end{theorem}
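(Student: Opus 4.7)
My plan is to prove Theorem~\ref{thm:P1+2P2-2P3-P6-and-complements-free-NPC} by a polynomial-time reduction from {\sc Colouring} on $(C_3,C_4,\ldots,C_L)$-free graphs, which is NP-complete by Lemma~\ref{lem:col-cycle-free} for every constant $L \geq 3$. Choosing $L$ large enough, an input graph $G$ has arbitrarily large girth. From such a $G$ together with an integer~$k$, I would construct a graph $G'$ in the target class satisfying $\chi(G') = \chi(G) + q$ for some fixed constant~$q$, so that $G'$ is $(k+q)$-colourable if and only if $G$ is $k$-colourable.

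The core of the construction is the join with a clique of size~$q$. Writing $G' := G \oplus K_q$ gives $\chi(G') = \chi(G) + q$ immediately, and every vertex of $K_q$ is dominating in $G'$. Consequently, for each of the six forbidden induced subgraphs~$F$ one can analyse which vertices of a putative induced copy of~$F$ can lie on the clique side. If $F$ has no vertex of degree $|V(F)|-1$, as is the case for $P_1+2P_2$, $2P_3$, $\overline{2P_3}$, $P_6$, and $\overline{P_6}$, then every induced copy of $F$ must lie entirely inside $G$. If $F$ has exactly one such vertex (the case $\overline{P_1+2P_2}$, the $5$-vertex wheel), then an induced copy either lies inside $G$ or has that dominating vertex in~$K_q$ and its remaining four vertices inducing a~$C_4$ in~$G$; the latter is impossible as soon as $G$ has girth at least~$5$.

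The remaining task is to force $G$ to avoid the five non-cyclic forbidden subgraphs as well, which a graph of large girth does not do in general. My plan is therefore to interpose a gadgeting step: replace each vertex of the high-girth instance by a constant-size gadget, and each edge by a specific connection gadget, in such a way that the chromatic number is preserved up to a controlled additive shift, while any would-be induced copy of one of $P_1+2P_2$, $2P_3$, $\overline{2P_3}$, $P_6$, $\overline{P_6}$ in the resulting graph can be traced back to a short closed walk in the original input, contradicting its girth.

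The hard part, and the technical heart of the proof, is exactly this gadget design together with the accompanying case analysis: for each of the five non-cyclic forbidden subgraphs one must enumerate the possible ways its five or six vertices can distribute among the gadgets, and argue in each case that the distribution either forces an internal structural contradiction within a gadget or yields a short cycle in the input graph. Once this case analysis is carried out cleanly, the chromatic-number equation and the polynomial-time bound on the construction follow routinely, completing the reduction.
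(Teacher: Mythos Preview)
Your proposal has a genuine gap: the ``gadgeting step'' on which the whole argument hinges is never specified, and there is good reason to believe it cannot be made to work along the lines you sketch.

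The difficulty is that $P_6$-freeness (and likewise $2P_3$- and $(P_1+2P_2)$-freeness) is a \emph{global} restriction on induced paths, whereas any $(C_3,\ldots,C_L)$-free graph with chromatic number at least~$3$ necessarily contains arbitrarily long induced paths (it contains an induced cycle of length at least $L+1$, hence an induced $P_L$). If you replace each vertex of such a graph by a bounded-size gadget and each edge by a local connection, then an induced path $u_1u_2\cdots u_6$ in the input will, in any natural implementation, give rise to an induced $P_6$ (or longer) threading through the gadgets $H_{u_1},\ldots,H_{u_6}$. Your stated mechanism for ruling this out---``traced back to a short closed walk in the original input, contradicting its girth''---does not apply: an induced $P_6$ in the gadget graph arising from an induced $P_6$ in the input corresponds to a \emph{path}, not a closed walk, and no girth assumption excludes it. To kill long induced paths you would need a global shortcut (e.g.\ a dominating vertex), but you have already spent that move on the $K_q$ join, and a dominating vertex does nothing to prevent an induced $P_6$ among the non-dominating vertices. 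So as it stands the plan does not close, and you have not indicated any concrete gadget that would.

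For comparison, the paper avoids this obstacle entirely by taking a completely different route: a direct reduction from {\sc Exact $3$-Cover}. Given an instance $(W,U)$ with $|W|=3q$ and $|U|=k$, they build an explicit graph $G_{W,U}$ on three parts---a clique $V_W$ of size $3q$, an independent set $V_U$ of size $k$ (with $v_u$ adjacent to the three $v_w$ with $w\in u$), and an independent set $A$ of size $k-q$ complete to $V_U$---and show that $(W,U)$ has an exact $3$-cover iff $V(G_{W,U})$ can be covered by $k$ cliques, i.e.\ iff $\overline{G_{W,U}}$ is $k$-colourable. The six forbidden subgraphs are then excluded by a short direct case analysis exploiting the very rigid structure of $G_{W,U}$ (one clique, two independent sets, controlled bipartite adjacencies). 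No gadgets and no girth are needed.
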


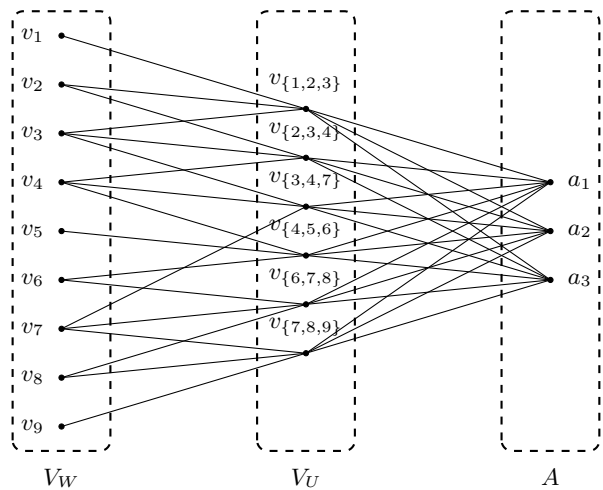
\begin{figure}
\begin{center}
\begin{tikzpicture}[scale=0.65]
\draw[rounded corners, color=black, thick, dashed] (-1,0+0.5) rectangle (1,10-0.5);
\draw[rounded corners, color=black, thick, dashed] (-1+5,0+0.5) rectangle (1+5,10-0.5);
\draw[rounded corners, color=black, thick, dashed] (-1+10,0+0.5) rectangle (1+10,10-0.5);

\draw (0,+0.5) node [label={[label distance=-0pt]-90:$V_W$}] {};
\draw (5,+0.5) node [label={[label distance=-0pt]-90:$V_U$}] {};
\draw (10,+0.5) node [label={[label distance=-0pt]-90:$A$}] {};

\coordinate (v1) at (0,1) ;
\coordinate (v2) at (0,2) ;
\coordinate (v3) at (0,3) ;
\coordinate (v4) at (0,4) ;
\coordinate (v5) at (0,5) ;
\coordinate (v6) at (0,6) ;
\coordinate (v7) at (0,7) ;
\coordinate (v8) at (0,8) ;
\coordinate (v9) at (0,9) ;
\draw [fill=black] (v1) circle (1.5pt) ;
\draw [fill=black] (v2) circle (1.5pt) ;
\draw [fill=black] (v3) circle (1.5pt) ;
\draw [fill=black] (v4) circle (1.5pt) ;
\draw [fill=black] (v5) circle (1.5pt) ;
\draw [fill=black] (v6) circle (1.5pt) ;
\draw [fill=black] (v7) circle (1.5pt) ;
\draw [fill=black] (v8) circle (1.5pt) ;
\draw [fill=black] (v9) circle (1.5pt) ;

\draw (v1) node [label={[label distance=-0pt]+180:$v_9$}] {};
\draw (v2) node [label={[label distance=-0pt]+180:$v_8$}] {};
\draw (v3) node [label={[label distance=-0pt]+180:$v_7$}] {};
\draw (v4) node [label={[label distance=-0pt]+180:$v_6$}] {};
\draw (v5) node [label={[label distance=-0pt]+180:$v_5$}] {};
\draw (v6) node [label={[label distance=-0pt]+180:$v_4$}] {};
\draw (v7) node [label={[label distance=-0pt]+180:$v_3$}] {};
\draw (v8) node [label={[label distance=-0pt]+180:$v_2$}] {};
\draw (v9) node [label={[label distance=-0pt]+180:$v_1$}] {};

\coordinate (w1) at (5,1+1.5) ;
\coordinate (w2) at (5,2+1.5) ;
\coordinate (w3) at (5,3+1.5) ;
\coordinate (w4) at (5,4+1.5) ;
\coordinate (w5) at (5,5+1.5) ;
\coordinate (w6) at (5,6+1.5) ;
\draw [fill=black] (w1) circle (1.5pt) ;
\draw [fill=black] (w2) circle (1.5pt) ;
\draw [fill=black] (w3) circle (1.5pt) ;
\draw [fill=black] (w4) circle (1.5pt) ;
\draw [fill=black] (w5) circle (1.5pt) ;
\draw [fill=black] (w6) circle (1.5pt) ;

\draw (w1) node [label={[label distance=-0pt]+90:$v_{\{7,8,9\}}$}] {};
\draw (w2) node [label={[label distance=-0pt]+90:$v_{\{6,7,8\}}$}] {};
\draw (w3) node [label={[label distance=-0pt]+90:$v_{\{4,5,6\}}$}] {};
\draw (w4) node [label={[label distance=-0pt]+90:$v_{\{3,4,7\}}$}] {};
\draw (w5) node [label={[label distance=-0pt]+90:$v_{\{2,3,4\}}$}] {};
\draw (w6) node [label={[label distance=-0pt]+90:$v_{\{1,2,3\}}$}] {};

\coordinate (x1) at (10,1+3) ;
\coordinate (x2) at (10,2+3) ;
\coordinate (x3) at (10,3+3) ;
\draw [fill=black] (x1) circle (1.5pt) ;
\draw [fill=black] (x2) circle (1.5pt) ;
\draw [fill=black] (x3) circle (1.5pt) ;

\draw (x1) node [label={[label distance=-0pt]+0:$a_3$}] {};
\draw (x2) node [label={[label distance=-0pt]+0:$a_2$}] {};
\draw (x3) node [label={[label distance=-0pt]+0:$a_1$}] {};

\draw (w1) -- (x1);
\draw (w2) -- (x1);
\draw (w3) -- (x1);
\draw (w4) -- (x1);
\draw (w5) -- (x1);
\draw (w6) -- (x1);

\draw (w1) -- (x2);
\draw (w2) -- (x2);
\draw (w3) -- (x2);
\draw (w4) -- (x2);
\draw (w5) -- (x2);
\draw (w6) -- (x2);

\draw (w1) -- (x3);
\draw (w2) -- (x3);
\draw (w3) -- (x3);
\draw (w4) -- (x3);
\draw (w5) -- (x3);
\draw (w6) -- (x3);

\draw (v1) -- (w1);
\draw (v2) -- (w1);
\draw (v3) -- (w1);

\draw (v2) -- (w2);
\draw (v3) -- (w2);
\draw (v4) -- (w2);

\draw (v4) -- (w3);
\draw (v5) -- (w3);
\draw (v6) -- (w3);

\draw (v3) -- (w4);
\draw (v6) -- (w4);
\draw (v7) -- (w4);

\draw (v6) -- (w5);
\draw (v7) -- (w5);
\draw (v8) -- (w5);

\draw (v7) -- (w6);
\draw (v8) -- (w6);
\draw (v9) -- (w6);

\end{tikzpicture}
\caption{\label{fig:G_WU}The graph~$G_{W,U}$ for the instance~$(W,U)$ of {\sc Exact $3$-Cover} where $q=3$, $k=6$, $W=\{1,\ldots,9\}$~and $U=\{
\{1,2,3\},\allowbreak
\{2,3,4\},\allowbreak
\{3,4,7\},\allowbreak
\{4,5,6\},\allowbreak
\{6,7,8\},\allowbreak
\{7,8,9\}
\}$. 
The
edges of the clique~$V_W$ are not~shown.}
\end{center}
\end{figure}

\begin{proof}
\setcounter{ctrclaim}{0}
Let~$k$ and~$q$ be positive integers with $k \geq q$.
Let~$W$ be a set of size~$3q$.
Let~$U$ be a collection of~$k$ subsets of~$W$ each of size~$3$.
An {\it exact $3$-cover} for $(W,U)$ is a set $U' \subseteq U$ of size~$q$ such that every member of~$W$ belongs to one of the subsets in~$U'$.
The \NP-complete problem {\sc Exact $3$-Cover}~\cite{GJ79} is that of determining if such a set~$U'$ exists. 
To prove the theorem, we describe a reduction from {\sc Exact $3$-Cover} to {\sc Colouring} for $(P_1+\nobreak 2P_2,\allowbreak \overline{P_1+2P_2},\allowbreak 2P_3,\allowbreak \overline{2P_3},\allowbreak P_6,\overline{P_6})$-free graphs.

Given an instance $(W,U)$ of {\sc Exact $3$-Cover}, we construct the graph~$G_{W,U}$ as follows (see~\figurename~\ref{fig:G_WU} for an example):
\begin{itemize}
\item Introduce a set of vertices $V_W = \{v_w \mid w \in W\}$ which forms a clique in~$G_{W,U}$. 
\item Introduce a set of vertices $V_U = \{v_u \mid u \in U\}$ which forms an independent set in~$G_{W,U}$.
\item Add an edge from $v_w \in V_W$ to $v_u \in V_U$ if and only if $w \in u$.
\item Introduce a set of vertices $A = \{a_i \mid 1 \leq i \leq k-q\}$ which forms an independent set in~$G_{W,U}$.
\item Add an edge from each vertex of~$A$ to each vertex of~$V_U$.
\end{itemize}

\clm{\label{clm3:reduction}The vertices of $G_{W,U}$ can be covered by at most~$k$ pairwise vertex-disjoint cliques if and only if~$U$ contains an exact $3$-cover~$U'$.}
First suppose that~$U$ contains an exact $3$-cover $U'$.
Let $V_{U'} = \{v_u \in V_U \mid u \in U'\}$.
For each $u \in U'$, let~$K^u$ be the clique on four vertices containing~$v_u$ and its three neighbours in~$V_W$; there are~$q$ such cliques.
Form a perfect matching between the vertices of $V_U \setminus V_{U'}$ and the vertices of~$A$, that is, a collection of $k-q$ cliques on two vertices.
Together, these~$k$ pairwise vertex-disjoint cliques cover~$V(G_{W,U})$.

Now suppose that the vertices of~$G_{W,U}$ can be covered by at most~$k$ pairwise vertex-disjoint cliques.
As~$V_U$ is independent and $|V_U|=k$, we have exactly~$k$ cliques, and each of them contains exactly one vertex of~$V_U$. 
Hence, as~$A$ is an independent set and each vertex in~$A$ is only adjacent to the vertices of~$V_U$, each of the $k-q$ vertices of~$A$ must be contained in a clique of size~$2$ that consists of a vertex of~$A$ and a vertex of~$V_U$.
There are~$q$ other cliques, which, as we deduced, also contain exactly one vertex of~$V_U$.
Each vertex of~$V_W$ must be in one of these cliques.
As each vertex in~$V_U$ has exactly three neighbours in~$V_W$ and there are~$3q$ vertices in~$V_W$, this means that each of these cliques must contain four vertices, consisting of one vertex of~$V_U$ and three vertices of~$V_W$.
Hence, if we let $u \in U$ belong to~$U'$ whenever~$v_u$ is in a clique of size~$4$, then~$U'$ forms an exact $3$-cover.
This completes the proof of Claim~\ref{clm3:reduction}.

\medskip
\noindent
We have shown a reduction from {\sc Exact $3$-Cover} to the problem of finding~$k$ pairwise vertex-disjoint cliques that cover $V(G_{W,U})$.
The latter problem is equivalent to finding a $k$-colouring of~$\overline{G_{W,U}}$ (note that~$k$ is part of the input of {\sc Exact $3$-Cover}).
It remains to show that~$\overline{G_{W,U}}$ is $(P_1+\nobreak 2P_2,\allowbreak \overline{P_1+2P_2},\allowbreak 2P_3,\allowbreak \overline{2P_3},\allowbreak P_6,\overline{P_6})$-free.

Since a graph~$G$ is $(P_1+\nobreak 2P_2,\allowbreak \overline{P_1+2P_2},\allowbreak 2P_3,\allowbreak \overline{2P_3},\allowbreak P_6,\overline{P_6})$-free if and only if~$\overline{G}$ is, it is sufficient to show that~$G_{W,U}$ is $(P_1+\nobreak 2P_2,\allowbreak \overline{P_1+2P_2},\allowbreak 2P_3,\allowbreak \overline{2P_3},\allowbreak P_6,\overline{P_6})$-free.
We do this in Claims~\ref{clm3:P1+2P2-free}-\nobreak\ref{clm3:co-P6-free}, but first we prove a useful observation. 

\clm{\label{clm3:jfour}Let~$J$ be an induced subgraph of~$G_{W,U}$ that is the complement of a linear forest on six vertices.
Then~$J$ contains at least four vertices of~$V_W$.}
As a linear forest contains no triangle, $J$ contains no independent set on three vertices.
So~$J$ cannot contain more than two vertices from either of the independent sets~$V_U$ or~$A$.
Thus it contains vertices of~$V_W$.
But then it cannot contain two vertices from~$A$ as, combined with a vertex of~$V_W$ this would again induce an independent set of size~$3$.
Hence there are at least three vertices of~$V_W$ in~$J$.
But this implies that not even one vertex of~$A$ belongs to~$J$ (as it would have three non-neighbours and every vertex of~$J$ is adjacent to all but at most two of the others).
This completes the proof of Claim~\ref{clm3:jfour}.

\clm{\label{clm3:P1+2P2-free}$G_{W,U}$ is $(P_1+\nobreak 2P_2)$-free.}
Suppose for contradiction that there is an induced $P_1+\nobreak 2P_2$ in~$G_{W,U}$.
Since~$V_U$ and~$A$ are independent sets, every~$P_2$ in~$G_{W,U}$ must either have two vertices in~$V_W$ or one vertex in~$V_U$ and the other in~$V_W$ or~$A$.
Since that~$V_W$ is a clique and every vertex in~$A$ is adjacent to every vertex in~$V_U$, one of the~$P_2$'s must have both its vertices in~$V_W$ and the other must have one vertex in~$V_U$ and the other in~$A$.
But every vertex in $G_{W,U}$ has a neighbour in such a~$2P_2$, so no induced $P_1+\nobreak 2P_2$ can exist.
This completes the proof of Claim~\ref{clm3:P1+2P2-free}.

\clm{\label{clm3:co-P1+2P2-free}$G_{W,U}$ is $\overline{P_1+2P_2}$-free.}
Suppose for contradiction that there is an induced $\overline{P_1+2P_2}$ in~$G_{W,U}$.
Note this subgraph consists of a~$C_4$, which we denote~$C$, plus an additional vertex, which we denote~$z$, that is adjacent to every vertex in~$C$.
If $z \in A$, then the vertices of~$C$ are all in~$V_U$, and if $z \in V_U$, the vertices of the~$C$ are either all in~$A$ or all in~$V_W$.
Neither is possible, so we must have $z \in V_W$.
Therefore none of the vertices of~$C$ is in~$A$.
At most two of the vertices of~$C$ are in each of~$V_U$ and~$V_W$ as~$C$ contains neither an independent set nor a clique on three vertices.
So~$C$ contains exactly two vertices from each of~$V_U$ and~$V_W$, but the pair from~$V_U$ must be non-adjacent in~$C$ and the pair from~$V_W$ must be adjacent in~$C$.
This contradiction completes the proof of Claim~\ref{clm3:co-P1+2P2-free}.

\clm{\label{clm3:2P3-free}$G_{W,U}$ is $2P_3$-free.} 
Suppose for contradiction that there is an induced~$2P_3$ in~$G_{W,U}$.
Denote this graph by~$P$.
As~$V_W$ is a clique,~$P$ contains at most two of its vertices.
If it contains exactly two, then~$P$ must also contain a vertex $v_u \in V_U$ such that the three vertices together induce a~$P_3$.
But then~$v_u$ cannot be adjacent to any other vertex of~$P$ so the remaining three vertices must all belong to~$V_U$, a contradiction as~$V_U$ is independent.
So~$P$ must contain at least five vertices of~$A$ and~$V_U$.
They cannot all belong to one of these two sets as~$P$ has no independent set of size~$5$, but if~$P$ contains vertices of both sets then there must be a vertex of degree~$3$.
This contradiction proves that~$P$ does not exist.
This completes the proof of Claim~\ref{clm3:2P3-free}.

\clm{\label{clm3:co-2P_3-free}$G_{W,U}$ is $\overline{2P_3}$-free.} 
Suppose for contradiction that there is an induced~$\overline{2P_3}$ in~$G$.
By Claim~\ref{clm3:jfour}, it contains four vertices of~$V_W$.
We note that~$\overline{2P_3}$ contains exactly one induced~$K_4$ and that the other two vertices are adjacent to exactly two vertices in the clique (so must both be in~$V_U$) and to each other (so cannot both be in~$V_U$).
This contradiction completes the proof of Claim~\ref{clm3:co-2P_3-free}.

\clm{\label{clm3:P6-free}$G_{W,U}$ is $P_6$-free.}
Suppose for contradiction that there is an induced~$P_6$ in~$G_{W,U}$.
Denote this path by~$P$.
As~$V_W$ is a clique,~$P$ contains at most two of its vertices.
We see that~$P$ cannot contain exactly four vertices of either~$V_U$ or~$A$ as they would form an independent set of size~$4$. $P$ cannot contain exactly three vertices of either~$V_U$ or~$A$ as then there would be a vertex in the other of the two sets of degree~$3$. Finally, $P$ cannot contain at least two vertices from each of~$V_U$ and~$A$ as then it would contain an induced~$C_4$.
These contradictions prove that~$P$ does not exist.
This completes the proof of Claim~\ref{clm3:P6-free}.

\clm{\label{clm3:co-P6-free}$G_{W,U}$ is $\overline{P_6}$-free.}
Suppose for contradiction that there is an induced~$\overline{P_6}$ in~$G$.
By Claim~\ref{clm3:jfour}, it contains four vertices of~$V_W$.
This contradiction -- a clique on four vertices is not an induced subgraph of~$\overline{P_6}$ -- completes the proof of Claim~\ref{clm3:co-P6-free}.\qedllncs
\end{proof}

\subsection{A Tractable Case}\label{sec:easy}

We show that {\sc Colouring} is polynomial-time solvable for $(P_2+\nobreak P_3,\allowbreak \overline{P_2+P_3})$-free graphs.
We first introduce some additional terminology.
The {\em clique-width}~$\cw(G)$ of a graph~$G$ is the minimum number of labels needed to construct~$G$ by using the following four operations:
\begin{enumerate}
\item creating a new graph consisting of a single vertex~$v$ with label~$i$;
\item taking the disjoint union of two labelled graphs~$G_1$ and~$G_2$;
\item joining each vertex with label~$i$ to each vertex with label~$j$ ($i\neq j$);
\item renaming label~$i$ to~$j$.
\end{enumerate}

\noindent
A class of graphs~${\cal G}$ has {\em bounded} clique-width if there is a constant~$c$ such that the clique-width of every graph in~${\cal G}$ is at most~$c$; otherwise the clique-width  is {\em unbounded}.
For an induced subgraph~$G'$ of a graph~$G$, the {\em subgraph complementation} operation (acting on~$G$ with respect to~$G'$) replaces every edge present in~$G'$ by a non-edge, and vice versa.
Similarly, for two disjoint vertex subsets~$S$ and~$T$ in~$G$, the {\em bipartite complementation} operation with respect to~$S$ and~$T$ acts on~$G$ by replacing every edge with one end-vertex in~$S$ and the other one in~$T$ by a non-edge and vice versa.
Let $k\geq 0$ be a constant and let~$\gamma$ be some graph operation.
We say that a graph class~${\cal G'}$ is {\em $(k,\gamma)$-obtained} from a graph class~${\cal G}$ if the following two conditions hold:
\begin{enumerate}
\item every graph in~${\cal G'}$ is obtained from a graph in~${\cal G}$ by performing~$\gamma$ at most~$k$ times, and
\item for every $G\in {\cal G}$ there exists at least one graph in~${\cal G'}$ that is obtained from~$G$ by performing~$\gamma$ at most~$k$ times.
\end{enumerate}

\noindent
We say that~$\gamma$ {\em preserves} boundedness of clique-width if for any finite constant~$k$ and any graph class~${\cal G}$, any graph class~${\cal G}'$ that is $(k,\gamma)$-obtained from~${\cal G}$ has bounded clique-width if and only if~${\cal G}$ has bounded clique-width.
\begin{enumerate}[\bf F{a}ct 1.]
\item \label{fact:del-vert} Vertex deletion preserves boundedness of clique-width~\cite{LR04}.\\[-1em]

\item \label{fact:comp} Subgraph complementation preserves boundedness of clique-width~\cite{KLM09}.\\[-1em]

\item \label{fact:bip} Bipartite complementation preserves boundedness of clique-width~\cite{KLM09}.\\[-1em]

\end{enumerate}

\noindent
Two (non-adjacent) vertices are {\em false twins} if they have the same neighbourhood. 
We will use the following two lemmas, which are readily seen.

\begin{lemma}\label{lem:atmost-2}
The clique-width of a graph of maximum degree at most~$2$ is at most~$4$.
\end{lemma}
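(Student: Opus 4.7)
The plan is to use the standard structural fact that a graph of maximum degree at most~$2$ is a disjoint union of paths and cycles, and then to exhibit a $4$-expression (a construction using at most four labels) that builds any such graph. Since clique-width behaves well under disjoint union (after each component is built we can relabel all its vertices to a single label, freeing the other three labels for the next component), it suffices to give constructions for a single path and a single cycle that stay within four labels.

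For a path $v_1,v_2,\ldots,v_n$ I would use three labels. Label $1$ will be reserved for the ``current endpoint'' of the path built so far; label $2$ will be used to introduce each new vertex; label $3$ will hold all previously constructed ``interior'' vertices, which should no longer receive any edges. Concretely, I would create $v_1$ with label $1$; then iteratively, for each new $v_i$, create it with label $2$, take disjoint union, join labels $1$ and $2$, then relabel $1 \to 3$ and $2 \to 1$. At every stage the partial graph is exactly the path $v_1\cdots v_i$, with $v_i$ carrying label $1$ and all the earlier vertices carrying label $3$, so three labels suffice.

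For a cycle $v_1,v_2,\ldots,v_n,v_1$ I would use four labels, reserving label $4$ to remember the very first vertex. I would create $v_1$ with label $4$, then $v_2$ with label $1$ and join labels $1$ and $4$; thereafter I would add $v_3,\ldots,v_n$ exactly as in the path construction (labels $1$, $2$, $3$), being careful never to rename label $4$. At the end $v_n$ carries label $1$ and $v_1$ still carries label $4$, so one final join of labels $1$ and $4$ closes the cycle. The fourth label is precisely the extra resource needed to ``reach back'' to the start, and is the only nontrivial point in the argument.

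Finally, to assemble a disjoint union of paths and cycles within four labels, I would build the components one at a time: after finishing a component, relabel every label appearing in it (at most all of $1,2,3,4$) down to a single label, say $3$, and then build the next component using labels $1$, $2$, and $4$ as above, making sure its own final ``reach back'' label is chosen among $\{1,2,4\}$ and that label $3$ is never joined to anything new. This keeps the total label count at $4$ throughout, yielding $\cw(G)\le 4$. The only thing that needs any care is this last bookkeeping step, ensuring that the label used to ``freeze'' finished components is disjoint from the labels used to grow the current component; once that convention is fixed, the argument is routine.
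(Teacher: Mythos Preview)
Your argument is correct and supplies exactly the standard construction that the paper omits: the paper states this lemma without proof, calling it ``readily seen''. Your $3$-label expression for a path and $4$-label expression for a cycle are the usual ones and are fine.

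One small point on the last paragraph: you are working harder than necessary. The disjoint-union operation in the definition of clique-width already combines any two $k$-expressions into a $k$-expression, with no prior relabelling required; so once each component has a $4$-expression, the whole graph does too, immediately. Your sequential freeze-and-grow scheme also works, but as written there is a slight tension: your own cycle construction uses all four labels, so ``build the next component using labels $1$, $2$, and $4$'' only makes sense if the interior vertices of the growing cycle are allowed to share the freeze label~$3$ (harmless, since label~$3$ is never joined to anything). With that reading your bookkeeping is fine; alternatively, just appeal to the trivial closure of $k$-expressions under disjoint union and drop the paragraph.
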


\begin{lemma}\label{lem:false-twin}
If a vertex~$x$ in a graph~$G$ has a false twin then $\cw(G)=\cw(G \setminus \{x\})$.
\end{lemma}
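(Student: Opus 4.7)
The plan is to prove both inequalities separately. The direction $\cw(G \setminus \{x\}) \leq \cw(G)$ follows from the standard observation that vertex deletion never increases clique-width: given any $k$-expression for $G$, deleting the leaf that creates $x$ (and removing operations that thereby become vacuous) produces a $k$-expression for $G \setminus \{x\}$.

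For the reverse inequality $\cw(G) \leq \cw(G \setminus \{x\})$, let $y$ be a false twin of $x$ in $G$, so that $N_G(x)=N_G(y)$ and $xy \notin E(G)$. Start from an optimal $k$-expression $\tau$ for $G \setminus \{x\}$ with $k=\cw(G\setminus\{x\})$, and let $\ell$ be the leaf of $\tau$ that introduces $y$ with some label~$i$. Modify $\tau$ by replacing $\ell$ with the disjoint union of two new leaves, one creating $y$ with label $i$ and one creating $x$ with label $i$; call the resulting expression $\tau'$. Clearly $\tau'$ still uses only $k$ labels.

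The key invariant is that $x$ and $y$ carry \emph{the same label} at every stage of the evaluation of $\tau'$: relabelling operations act on both or neither, disjoint-union operations preserve all labels, and a join operation acts only between two distinct labels, so it either adds the same edges incident to $x$ as to $y$, or touches neither. Consequently, in the graph produced by $\tau'$ we have $N(x)=N(y)$, and $x,y$ are non-adjacent because they are never joined (join is never applied between a label and itself). The induced subgraph on $V(G)\setminus\{x\}$ produced by $\tau'$ is therefore $G \setminus \{x\}$, and $x$ is a false twin of $y$ in the produced graph, so the produced graph is precisely $G$. This yields a $k$-expression for $G$, giving $\cw(G) \leq k = \cw(G \setminus \{x\})$.

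The only subtle point is verifying the invariant that $x$ and $y$ share a label throughout; once this is noted, both the neighbourhood equality and the non-adjacency of $x$ and $y$ in the constructed graph are immediate from the semantics of the four clique-width operations.
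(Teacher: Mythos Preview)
Your proof is correct. The paper does not actually prove this lemma; it introduces it together with Lemma~\ref{lem:atmost-2} as ``readily seen'' and gives no argument. Your approach---duplicating the leaf that creates the false twin~$y$ in an optimal expression and observing that $x$ and $y$ then carry the same label throughout, hence acquire identical neighbourhoods and are never joined to one another---is exactly the standard justification one would supply for this folklore fact.
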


The following lemma for $(P_2+\nobreak P_3)$-free bipartite graphs follows from a result of Lozin~\cite{Lozin2002}, who proved that the superclass of $S_{1,2,3}$-free bipartite graphs has bounded clique-width (see~\cite{DP14} for a full classification of the boundedness of clique-width of $H$-free bipartite graphs).

\begin{lemma}[\cite{Lozin2002}]\label{lem:bipartite}
The class of $(P_2+\nobreak P_3)$-free bipartite graphs has bounded clique-width.
\end{lemma}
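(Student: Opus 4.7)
The plan is to derive this lemma directly from the cited Lozin theorem by observing the appropriate induced-subgraph containment between the two forbidden patterns. Specifically, I would first verify that $P_2+\nobreak P_3$ is an induced subgraph of~$S_{1,2,3}$. Recall that $S_{1,2,3}$ consists of a central vertex~$x$ of degree~$3$ joined to the initial vertices of three internally-vertex-disjoint paths of lengths~$1$,~$2$, and~$3$. Deleting~$x$ together with the single leaf on the length-$1$ arm leaves two vertex-disjoint induced paths, one on~$2$ vertices (the length-$2$ arm minus~$x$) and one on~$3$ vertices (the length-$3$ arm minus~$x$), with no edges between them. This is exactly an induced $P_2+\nobreak P_3$.

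From $P_2+\nobreak P_3 \ssi S_{1,2,3}$ it follows by contraposition that every $(P_2+\nobreak P_3)$-free graph is $S_{1,2,3}$-free, and hence the class of $(P_2+\nobreak P_3)$-free bipartite graphs is a subclass of the class of $S_{1,2,3}$-free bipartite graphs. Lozin~\cite{Lozin2002} proved that this superclass has bounded clique-width, and since any subclass of a class of bounded clique-width is itself of bounded clique-width (the same constant witnesses both bounds), the lemma follows.

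The only real content is the induced-containment check, so there is no substantial obstacle to the argument; the reduction to the cited theorem is essentially a one-line observation. If one wanted to avoid invoking Lozin's theorem and give a self-contained proof, the main difficulty would be building a suitable decomposition of a $(P_2+\nobreak P_3)$-free bipartite graph into a bounded number of well-structured parts (for instance, by showing that the neighbourhoods of one side exhibit a chain-like structure once a $P_2+\nobreak P_3$ is forbidden), and then translating that decomposition into a clique-width expression using Facts~\ref{fact:del-vert}--\ref{fact:bip}; however, for the present purposes the subclass argument above is the cleanest route.
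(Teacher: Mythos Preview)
Your proposal is correct and matches the paper's own justification exactly: the paper states (just before Lemma~\ref{lem:bipartite}) that the result follows from Lozin's theorem because $(P_2+\nobreak P_3)$-free bipartite graphs form a subclass of $S_{1,2,3}$-free bipartite graphs, and your induced-containment check $P_2+\nobreak P_3 \ssi S_{1,2,3}$ is precisely what makes that inclusion work.
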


A graph~$G$ is {\em perfect} if $\chi(H)=\omega(H)$ for every induced subgraph~$H$ of~$G$.
Besides the above lemmas we will also apply the following three well-known theorems.

\begin{theorem}[\cite{CRST06}]\label{thm:spgt}
A graph is perfect if and only if it is $C_r$-free and $\overline{C_r}$-free
for every odd $r \geq 5$.
\end{theorem}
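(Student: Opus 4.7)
The statement is the Strong Perfect Graph Theorem, whose full proof by Chudnovsky, Robertson, Seymour and Thomas runs to some 150 pages of intricate structural graph theory, so I can only outline the architecture I would try to follow.

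The easy direction is a direct verification: for odd $r \geq 5$, the cycle $C_r$ satisfies $\chi(C_r)=3$ while $\omega(C_r)=2$, so it is not perfect; by the Perfect Graph Theorem of Lov\'asz (which asserts that $G$ is perfect iff $\overline{G}$ is) the antihole $\overline{C_r}$ is then also imperfect. Hence any perfect graph must be $C_r$-free and $\overline{C_r}$-free for every odd $r\geq 5$. I would prove Lov\'asz's theorem separately by his classical replacement argument, which is short and self-contained.

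For the converse, call $G$ a \emph{Berge graph} if it is $C_r$-free and $\overline{C_r}$-free for every odd $r\geq 5$. The plan is to prove by induction on $|V(G)|$ that every Berge graph is perfect, resting on two pillars. First, I would isolate a list of \emph{basic} classes of Berge graphs whose perfection can be established by elementary arguments: bipartite graphs, line graphs of bipartite graphs, the complements of these two classes, and double split graphs; these can be handled via K\"onig's theorem and an explicit colouring in the double-split case. Second, I would introduce several types of structural \emph{decomposition} of a graph, namely balanced skew partitions, $2$-joins (together with their complements), and homogeneous pairs, and show that if a Berge graph $G$ admits any one of them then its perfection follows from the perfection of strictly smaller Berge ``blocks'' produced by the decomposition. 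This pull-back step is routine for $2$-joins and homogeneous pairs, but for the balanced skew partition it requires a subtle argument that recomposes colourings of modified subgraphs.

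With these two pillars in place, the whole theorem reduces to the structural statement that every Berge graph is either basic or admits one of the above decompositions, and this is where essentially all the difficulty lies; this is the main obstacle I foresee. I would attempt it by a long case analysis driven by the presence or absence of certain configurations (long prisms, wheels, stretchers, and so on), showing in each case that one can locate a skew partition or a $2$-join, and that the absence of all such configurations forces $G$ into a basic class. In practice I would cite the CRST decomposition theorem as a black box, since reproducing this case analysis in even sketch form is infeasible, and restrict my own verifications to the perfection of the basic classes and to the decomposition-preserves-perfection step.
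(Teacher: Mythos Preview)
The paper does not give its own proof of this statement: Theorem~\ref{thm:spgt} is simply quoted from~\cite{CRST06} as a known result and used as a black box (in the proof of Theorem~\ref{thm:P2P_3-co-P_2+P_3-free}, to conclude that a $(P_2+P_3,\overline{P_2+P_3})$-free graph with no induced~$C_5$ is perfect). So there is nothing in the paper to compare your proposal against.

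Your outline is, in broad strokes, an accurate description of the architecture of the Chudnovsky--Robertson--Seymour--Thomas proof: the easy direction via Lov\'asz's Perfect Graph Theorem, the list of basic classes, the decomposition types, and the main structural dichotomy. One small correction: homogeneous pairs were used in the original announcement but were subsequently eliminated from the final version of the CRST proof, so the decomposition list there consists only of balanced skew partitions and $2$-joins (and complement $2$-joins). Also, the ``pull-back'' step for balanced skew partitions is not actually carried out directly in CRST; rather, the argument is arranged so that a \emph{minimum} imperfect Berge graph can be shown not to admit a balanced skew partition, which is a different (and more delicate) route than recomposing colourings. These are refinements rather than gaps, and your plan to cite the decomposition theorem as a black box is exactly what this paper does for the whole result.
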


\begin{theorem}[\cite{GLS84}]\label{t-perfect-colouring}
{\sc Colouring} is polynomial-time solvable on perfect graphs.
\end{theorem}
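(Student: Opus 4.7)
The plan is to reduce Colouring on perfect graphs to the Lovász theta number $\vartheta$, which is computable in polynomial time via semidefinite programming. The key algebraic ingredient is the sandwich theorem $\omega(G) \leq \vartheta(\overline{G}) \leq \chi(G)$, which on a perfect graph forces $\vartheta(\overline{G}) = \chi(G) = \omega(G)$ and, dually, $\vartheta(G) = \alpha(G)$; the same identities persist on every induced subgraph since perfection is hereditary. The key algorithmic ingredient is that the theta body admits a polynomial-time separation oracle, so by the ellipsoid method $\vartheta$ is computable to arbitrary precision in polynomial time, and on perfect graphs the value is an integer that can be recovered exactly.

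To turn the value $\chi(G) = \omega(G)$ into an explicit colouring, I would use self-reducibility in two nested layers. The inner layer extracts a maximum stable set $S$ of the current graph: for each vertex $v$ in turn, compute $\vartheta(G \setminus \{v\})$; if it equals $\vartheta(G)$, then delete $v$, otherwise $v$ lies in every maximum stable set and one continues on $G \setminus N[v]$. Since perfection is preserved under vertex deletion, the equality $\vartheta = \alpha$ survives every step, so the procedure terminates with a genuine maximum stable set after $O(n)$ oracle calls. The outer layer then builds the colouring one class at a time, peeling off $S$ and recursing on $G \setminus S$, which is still perfect.

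The main obstacle is that an arbitrary maximum stable set need not meet every maximum clique, so $\omega$ may fail to strictly decrease and the recursion could overshoot $\omega(G)$ colours. The GLS remedy is to run the theta oracle not on $G$ itself but on a weighted auxiliary graph in which the vertices of maximum cliques are replicated with carefully chosen multiplicities; a maximum stable set of this blow-up pulls back to a stable set $S$ in $G$ that hits every maximum clique, and the existence of such an $S$ is guaranteed by Lovász's replication lemma for perfect graphs. With $\omega(G \setminus S) = \omega(G) - 1$ secured, induction on $|V(G)|$ yields a $\chi(G)$-colouring in polynomial time.
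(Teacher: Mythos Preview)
The paper does not prove this theorem; it is quoted verbatim from Gr\"otschel, Lov\'asz, and Schrijver~\cite{GLS84} and used as a black box. Your sketch is a faithful high-level outline of exactly that GLS argument: compute $\vartheta$ via the ellipsoid method over the theta body, use the sandwich $\omega(G)\le\vartheta(\overline{G})\le\chi(G)$ together with perfection to recover $\chi(G)$ and $\alpha(G)$ exactly, extract a maximum stable set by self-reducibility, and iterate. You also correctly flag the one nontrivial obstacle (an arbitrary maximum stable set need not meet every maximum clique) and the standard GLS fix via weighted $\vartheta$ on a replicated graph, whose correctness rests on Lov\'asz's replication lemma. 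So your proposal matches what the citation points to; the only caveat is that the full technical execution (a polynomial separation oracle for the theta body, rounding, and the iterative clique-listing procedure that produces the right weights) is substantial and is precisely what~\cite{GLS84} supplies.
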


\begin{theorem}[\cite{KR03b,OS06}]\label{thm:colouring-for-bdd-cw}
For any constant~$c$, {\sc Colouring} is polynomial-time solvable on graphs of clique-width at most~$c$.
\end{theorem}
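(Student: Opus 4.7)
The plan has two parts. First, I would invoke the Oum--Seymour algorithm of~\cite{OS06}, which, for any constant~$c$, in polynomial time either certifies that $\cw(G)>c$ or produces an explicit $c'$-expression for~$G$, where~$c'$ depends only on~$c$. Hence we may assume that a clique-width expression tree~$T$ is in hand, and proceed by dynamic programming on~$T$ bottom up.

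At each node~$t$ of~$T$ we have a $c'$-labelled graph~$G_t$, and the core of the argument is to store a compact \emph{state} that captures exactly the information about~$G_t$ still needed by later operations. I would represent a proper colouring of~$G_t$ by its \emph{profile}: for each subset $S\subseteq [c']$, the number of colour classes whose set of occurring labels is exactly~$S$. All information about which specific vertex lies in which class is discarded, but nothing that matters is lost, because every subsequent operation -- a join between two labels, a relabelling, or a disjoint union -- depends only on the label structure. The chromatic number of~$G$ is then read off at the root as the smallest total colour count realised by some profile.

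The main obstacle is to bound the state space and to give correct recurrences. Since a profile is a function from~$2^{[c']}$ to $\{0,1,\ldots,n\}$, there are at most $(n+1)^{2^{c'}}$ profiles at any node, which is polynomial in~$n$ for fixed~$c$. The recurrences are then straightforward: \emph{introduce} seeds a profile with one class; \emph{relabel}$_{i\to j}$ shifts each support's copy of~$i$ to~$j$; \emph{join}$_{i,j}$ discards every profile that contains a class whose support meets both~$i$ and~$j$; and \emph{disjoint union} is handled by enumerating all ways of pairing colour classes from the two children and taking unions of their supports, which again costs only polynomially many operations for fixed~$c$. Combining this dynamic program with the Oum--Seymour decomposition yields the polynomial-time algorithm claimed.
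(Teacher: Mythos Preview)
Your outline is correct and matches the approach of the two cited papers: Oum and Seymour~\cite{OS06} supply the approximate $c'$-expression in polynomial time, and Kobler and Rotics~\cite{KR03b} give precisely the profile-based dynamic program you describe. The paper itself does not prove this theorem; it is quoted from the literature and used as a black box, so there is no ``paper's own proof'' to compare against beyond the citations, with which your sketch is consistent.

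One small point worth tightening if you ever write this out in full: the disjoint-union step is the only place where the polynomial bound is not immediate. Your phrase ``enumerating all ways of pairing colour classes from the two children'' could be read as iterating over individual classes, which would be exponential; what actually makes it polynomial is that one works at the level of profiles and, for a fixed pair of child profiles, computes the set of achievable merged profiles by a secondary dynamic program over the $2^{c'}$ support types. This is exactly what Kobler and Rotics do, and your parenthetical ``costs only polynomially many operations for fixed~$c$'' is true, but the reason deserves a sentence.
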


If~$G$ is a graph, then a (possibly empty) set $C \subseteq V(G)$ is a {\em clique separator} if~$C$ is a clique and $G \setminus C$ is disconnected.
A graph~$G$ is an {\em atom} if it has no clique separator.
Due to the following result of Tarjan, when considering the {\sc Colouring} problem in a hereditary class, we only need to consider atoms in the class.

\begin{lemma}[\cite{Tarjan85}]\label{lem:colouring-no-clique-cutset}
If {\sc Colouring} is polynomial-time solvable on atoms in an hereditary class~${\cal G}$, then it is polynomial-time solvable on all graphs in~${\cal G}$.
\end{lemma}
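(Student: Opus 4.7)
The plan is to reduce {\sc Colouring} on a graph $G \in {\cal G}$ to {\sc Colouring} on a polynomial-size family of atoms of~$G$, by invoking the clique-separator decomposition of Tarjan. First I would observe the key structural identity: if $C$ is a clique separator of a graph $H$ with $V(H) \setminus C = A \cup B$ and no edges between $A$ and $B$, and if we set $H_A = H[A \cup C]$ and $H_B = H[B \cup C]$, then $\chi(H) = \max(\chi(H_A), \chi(H_B))$. The inequality~$\geq$ is immediate since $H_A$ and $H_B$ are induced subgraphs of~$H$. For~$\leq$, given proper $k$-colourings~$f_A$ of~$H_A$ and~$f_B$ of~$H_B$ with $k = \max(\chi(H_A), \chi(H_B))$, both colourings use exactly $|C|$ distinct colours on the clique~$C$; by permuting the colour labels of~$f_B$ we may assume $f_A|_C = f_B|_C$, and then the function equal to $f_A$ on $A \cup C$ and $f_B$ on $B \cup C$ is well-defined and proper, since every edge of~$H$ lies entirely in~$H_A$ or entirely in~$H_B$.

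Next I would iterate this splitting. Starting from~$G$, repeatedly pick a graph in the current collection that is not an atom, locate a clique separator, and replace it by the two induced subgraphs it defines. Since each split strictly decreases the number of vertices in some piece, the process terminates, yielding a collection of atoms $G_1,\ldots,G_t$. Applying the identity above along the decomposition gives $\chi(G) = \max_{i} \chi(G_i)$, and in particular $G$ is $k$-colourable if and only if every $G_i$ is $k$-colourable. Since~${\cal G}$ is hereditary and each~$G_i$ is an induced subgraph of~$G$, every~$G_i$ lies in~${\cal G}$, so the assumed polynomial-time algorithm for {\sc Colouring} on atoms of~${\cal G}$ applies to each~$G_i$.

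To package this as a polynomial-time algorithm for the original {\sc Colouring} instance $(G,k)$, I would compute the decomposition into atoms explicitly, test $k$-colourability of each atom using the hypothesis, and answer yes if and only if all tests succeed. The main obstacle is the algorithmic claim that the clique-separator decomposition itself can be produced in polynomial time and that it yields only polynomially many atoms whose total size is polynomial in $|V(G)|$. This is precisely the content of Tarjan's result, so I would appeal to~\cite{Tarjan85} as a black box for the decomposition, rather than re-derive it, and focus the proof on the correctness step above.
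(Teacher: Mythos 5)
Your proposal is correct and is essentially the argument behind the cited result: the paper itself imports this lemma from Tarjan's work without reproving it, and your reconstruction --- gluing $k$-colourings of $H_A$ and $H_B$ across the clique separator after permuting colours so that they agree on~$C$, recursing to atoms, and using heredity to place each atom back in~${\cal G}$ --- is exactly the standard proof. You also correctly isolate the only non-elementary ingredient, namely that the decomposition can be computed in polynomial time with only polynomially many atoms of polynomial total size (naive recursive splitting does not obviously guarantee this, since the pieces overlap on the separators), and deferring that bound to Tarjan's algorithm is the appropriate way to close the argument.
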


\noindent
Let~$X$ be a set of vertices in a graph $G=(V,E)$.
A vertex $y\in V\setminus X$ is {\em complete} (resp. {\em anti-complete}) to~$X$ if it is adjacent (resp. non-adjacent) to every vertex of~$X$.
A set of vertices $Y\subseteq V\setminus X$ is {\em complete} (resp. {\em anti-complete}) to~$X$ if every vertex in~$Y$ is complete (resp. anti-complete) to~$X$.
If~$X$ and~$Y$ are disjoint sets of vertices in a graph, we say that the edges between these two sets form a {\em matching} if each vertex in~$X$ has at most one neighbour in~$Y$ and vice versa (if each vertex has exactly one such neighbour, we say that the matching is {\em perfect}).
Similarly, the edges between these sets form a {\em co-matching} if each vertex in~$X$ has at most one non-neighbour in~$Y$ and vice versa.

In Theorem~\ref{thm:P2P_3-co-P_2+P_3-free} we present an algorithm for {\sc Colouring} restricted to $(P_2+\nobreak P_3,\allowbreak \overline{P_2+P_3})$-free graphs.
Our algorithm first tries to reduce to perfect graphs, in which case we can use Theorem~\ref{t-perfect-colouring}.
If a $(P_2+\nobreak P_3,\allowbreak \overline{P_2+P_3})$-free graph is not perfect, then we show that it must contain an induced~$C_5$.
The clique-width of such graphs is not bounded 
(we can construct a $(P_2+\nobreak P_3,\overline{P_2+P_3})$-free graph of arbitrarily large clique-width by taking a split graph of arbitrarily large clique-width~\cite{MR99} with clique~$C$ and independent set~$I$ and adding five new vertices that form an induced~$C_5$
and that are complete to~$C$ but anti-complete to~$I$).
Therefore, we do some pre-processing to simplify the graph.
This enables us to bound the clique-width so that we may then apply Theorem~\ref{thm:colouring-for-bdd-cw}.
Our general scheme for bounding the clique-width is to partition the remaining vertices into sets according to their neighbourhood in the~$C_5$ and then investigate the possible edges both inside these sets and between them.
This enables us to use graph operations (see also Facts~\ref{fact:del-vert}--\ref{fact:bip}) that do not change the clique-width by ``too much'' to partition the input graph into disjoint pieces known to have bounded clique-width.

\begin{theorem}\label{thm:P2P_3-co-P_2+P_3-free}
{\sc Colouring} is polynomial-time solvable for $(P_2+\nobreak P_3,\overline{P_2+P_3})$-free graphs.
\end{theorem}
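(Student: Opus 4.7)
The plan is to combine the reduction to atoms provided by Lemma~\ref{lem:colouring-no-clique-cutset} with a split on perfectness. First I would invoke Lemma~\ref{lem:colouring-no-clique-cutset} so that it suffices to handle atoms in the class. If $G$ is perfect, then Theorem~\ref{t-perfect-colouring} solves {\sc Colouring} immediately, so I would concentrate on non-perfect atoms. By Theorem~\ref{thm:spgt} such a $G$ contains an induced $C_r$ or $\overline{C_r}$ for some odd $r\geq 5$; but for $r \geq 7$ the five vertices $v_1,v_2,v_4,v_5,v_6$ of $C_r=v_1v_2\cdots v_r$ induce a $P_2+\nobreak P_3$, and taking complements rules out every odd antihole of length at least $7$. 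Since $\overline{C_5}=C_5$, every non-perfect atom in our class contains an induced $C_5$.

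Fix such a $C_5$ on $\{v_1,\ldots,v_5\}$ and partition $V(G)\setminus\{v_1,\ldots,v_5\}$ into classes $V_S$ indexed by $S \subseteq \{1,\ldots,5\}$, according to the neighbourhood in the $C_5$. The bulk of the argument is a structural analysis: using the forbidden induced subgraphs (and their complements), together with the atom assumption to forbid clique separators that would otherwise arise, I would show that only a short list of types~$S$ can contain a vertex, and that the edges within each $V_S$ and between each pair $V_S, V_{S'}$ are heavily restricted---essentially cliques, independent sets, matchings, co-matchings, complete bipartite graphs, or $(P_2+\nobreak P_3)$-free bipartite subgraphs. The rotation/reflection symmetry of the $C_5$, together with the fact that the two forbidden subgraphs are complements of each other (so $G \mapsto \overline{G}$ gives a further symmetry), keeps the number of genuinely distinct configurations manageable.

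Once this structural picture is in hand, I would apply Facts~\ref{fact:del-vert}--\ref{fact:bip} together with Lemmas~\ref{lem:atmost-2}, \ref{lem:false-twin} and \ref{lem:bipartite}: delete the five $C_5$-vertices, apply a bounded number of bipartite complementations to turn the controlled inter-class patterns into non-edges, and use subgraph complementations to flip cliques into independent sets where needed. What remains should decompose into pieces each lying in a class of bounded clique-width (bipartite $(P_2+\nobreak P_3)$-free graphs, graphs of maximum degree at most $2$, or graphs that collapse under false-twin reduction to such), so $\cw(G)$ is bounded and Theorem~\ref{thm:colouring-for-bdd-cw} finishes the proof. The main obstacle is the structural case analysis: verifying for every combination of non-empty classes $V_S$ that $(P_2+\nobreak P_3,\overline{P_2+P_3})$-freeness, together with the atom assumption, really does force the edge patterns into one of these manageable shapes. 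I expect this step to require a careful enumeration with repeated use of the no-clique-separator hypothesis to eliminate otherwise-permissible wild configurations, as well as the interplay between a type~$V_S$ and its \emph{complementary} type~$V_{\{1,\ldots,5\}\setminus S}$ via the $\overline{P_2+P_3}$ constraint.
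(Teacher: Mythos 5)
Your proposal follows the paper's proof essentially step for step: reduce to atoms via Lemma~\ref{lem:colouring-no-clique-cutset}, dispose of the $C_5$-free case by perfection (Theorems~\ref{thm:spgt} and~\ref{t-perfect-colouring}), partition the remaining vertices into the classes $V_S$ by their neighbourhoods on an induced $C_5$, use the atom assumption exactly where you predict (to make $V_\emptyset$ complete to $V_{1,2,3,4,5}$), and finish by bounding clique-width with Facts~\ref{fact:del-vert}--\ref{fact:bip} and Lemmas~\ref{lem:atmost-2}, \ref{lem:false-twin} and~\ref{lem:bipartite} before applying Theorem~\ref{thm:colouring-for-bdd-cw}. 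The one caveat is that what you defer as the ``main obstacle'' is where virtually all of the paper's work lies -- a long sequence of structural claims showing that only $V_\emptyset$, $V_{1,2,3,4,5}$ and the sets $V_{i,i+2}$ can be large, followed by a case split on which of the $V_{i,i+2}$ are large (including a subcase that locates a second induced $C_5$ among $V_{1,3}\cup V_{2,4}\cup V_{4,1}$ and reruns the earlier arguments against it) -- so as written this is an accurate outline of the paper's argument rather than a complete proof.
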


\begin{proof}
\setcounter{ctrclaim}{0}
Let~$G$ be a $(P_2+\nobreak P_3,\overline{P_2+P_3})$-free graph.
By Lemma~\ref{lem:colouring-no-clique-cutset} we may assume that~$G$ is an atom and so has no clique separator (we will use this assumption in the proof of Claim~\ref{clm:V_0-V_12345-triv}).
We first test in $O(|V(G)|^5)$ time whether~$G$ contains an induced~$C_5$.
If not, then~$G$ is $C_5$-free.
Since $\overline{C_5}=C_5$, $G$ is also $\overline{C_5}$-free.
Since~$G$ is $(P_2+\nobreak P_3)$-free, $G$ is $C_k$-free for all $k \geq 7$.
Since~$G$ is $(\overline{P_2+\nobreak P_3})$-free, $G$ is $\overline{C_k}$-free for all $k \geq 7$.
By Theorem~\ref{thm:spgt}, this means that~$G$ is a perfect graph.
Therefore, by Theorem~\ref{t-perfect-colouring}, we can solve {\sc Colouring} in polynomial time in this case.

\begin{sloppypar}
Now assume our algorithm found an induced $5$-vertex cycle~$C$ with vertices $v_1,v_2,v_3,v_4,v_5$, in that order.
For $S \subseteq \{1,\ldots,5\}$, let~$V_S$ be the set of vertices $x \in V(G) \setminus V(C)$ such that $N(x)\cap V(C)=\{v_i \;|\; i \in S\}$. 
A set~$V_S$ is {\em large} if it contains at least three vertices, otherwise it is {\em small}.
\end{sloppypar}

To ease notation, in the remainder of the proof, subscripts on vertex sets should be interpreted modulo~$5$
and whenever possible we will write~$V_i$ instead of~$V_{\{i\}}$ and~$V_{i,j}$ instead of~$V_{\{i,j\}}$ and so on.

We start with three structural claims.

\clm{\label{clm:V0-indep}$V_\emptyset$ is an independent set.}
If $x,y \in V_\emptyset$ are adjacent then $G[x,y,v_1,v_2,v_3]$ is a $P_2+\nobreak P_3$, a contradiction.
Therefore $V_\emptyset$ is an independent set. 
This completes the proof of Claim~\ref{clm:V0-indep}.

\begin{sloppypar}
\clm{\label{clm:V1cupV12-small}$|V_i \cup V_{i,i+1}| \leq 1$ and $|V_{i+1} \cup V_{i,i+1}| \leq 1$ for $i \in \{1,2,3,4,5\}$.}
Suppose for contradiction that $x,y \in V_1 \cup V_{1,2}$.
If~$x$ and~$y$ are adjacent then $G[x,y,v_3,v_4,v_5]$ is a $P_2+\nobreak P_3$ and if~$x$ and~$y$ are non-adjacent then $G[v_3,v_4,x,v_1,y]$ is a $P_2+\nobreak P_3$.
This contradiction implies that $|V_1 \cup V_{1,2}| \leq 1$.
Claim~\ref{clm:V1cupV12-small} follows by symmetry.
\end{sloppypar}

\clm{\label{clm:V13-indep}$V_{i,i+2}$ is an independent set for $i \in \{1,2,3,4,5\}$.}
Suppose for contradiction that $x,y \in V_{1,3}$ are adjacent.
Then $G[v_1,v_3,x,v_2,y]$ is a $\overline{P_2+P_3}$, a contradiction.
Therefore $V_{1,3}$ is an independent set.
Claim~\ref{clm:V13-indep} follows by symmetry.

\medskip
\noindent
Since $\overline{C_5}=C_5$, it follows that~$\overline{G}$ also contains a~$C_5$, namely on the vertices $v_1$, $v_3$, $v_5$, $v_2$ and~$v_4$ in that order.
Therefore~$\overline{G}$ is also a $(P_2+\nobreak P_3,\allowbreak \overline{P_2+P_3})$-free graph containing an induced~$C_5$.
As a result, we immediately obtain the following claims as corollaries of Claims~\ref{clm:V0-indep}--\ref{clm:V13-indep}.

\clm{\label{clm:V12345-clique}$V_{1,2,3,4,5}$ is a clique.}
\shortclm{\label{clm:V2345cupV245-small}$|V_{i,i+1,i+2,i+3} \cup V_{i,i+1,i+3}| \leq 1$ and $|V_{i,i+1,i+2,i+3} \cup V_{i,i+2,i+3}| \leq 1$ for $i \in \{1,2,3,4,5\}$.}\\
\shortclm{\label{clm:V234-clique}$V_{i,i+1,i+2}$ is a clique for $i \in \{1,2,3,4,5\}$.}

\medskip
\noindent
Let~$I$ be the set of vertices in~$V_\emptyset$ that have a non-neighbour in~$V_{1,2,3,4,5}$.

\clm{\label{clm:V_0-V_12345-triv}$G$ is $k$-colourable if and only if $G \setminus I$ is $k$-colourable.}
The ``only if'' direction is trivial.
Suppose~$G \setminus I$ is $k$-colourable. Fix a $k$-colouring~$c$ of $G \setminus I$.
We will show how to extend $c$ to all vertices of~$G$.

We may assume~$V_\emptyset$ and~$V_{1,2,3,4,5}$ are non-empty, otherwise $I=\emptyset$ and the claim follows trivially.
Note that
$V_\emptyset$ is an independent set by Claim~\ref{clm:V0-indep} and that~$V_{1,2,3,4,5}$ is a clique by Claim~\ref{clm:V12345-clique}.
By assumption, $V_{1,2,3,4,5}$ is not a clique separator in~$G$.
Since~$V_\emptyset$ is an independent set, every vertex of~$V_\emptyset$ must have a neighbour in $V(G) \setminus (V_\emptyset \cup V_{1,2,3,4,5} \cup V(C))$ that is adjacent to some vertex of the cycle~$C$.
Let~$A$ be the set of vertices outside~$V_{1,2,3,4,5}$ that have a neighbour in~$V_\emptyset$.
Note that~$A$ contains no vertex of~$V_\emptyset$, as~$V_\emptyset$ is an independent set.
Hence, every vertex of~$A$ is adjacent to least one vertex of $C$
(but not to all vertices of~$C$).

We claim that~$A$ is complete to~$V_{1,2,3,4,5}$.
Indeed, suppose for contradiction that~$x \in A$ is non-adjacent to~$y \in V_{1,2,3,4,5}$.
Let~$s$ be a vertex in~$V_\emptyset$ that is a neighbour of~$x$. 
Without loss of generality, assume that~$x$ is adjacent to~$v_1$.
Then~$x$ is adjacent to~$v_3$ or~$v_4$, otherwise $G[v_3,v_4,s,x,v_1]$ would be a $P_2+\nobreak P_3$, a contradiction.
Without loss of generality, assume that~$x$ is adjacent to~$v_3$.
Then~$x$ is adjacent to~$v_2$, otherwise $G[v_1,v_3,y,x,v_2]$ would be a $\overline{P_2+P_3}$, a contradiction.
Thus~$x$ is adjacent to~$v_4$ or~$v_5$, otherwise $G[v_4,v_5,s,x,v_2]$ would be a $P_2+\nobreak P_3$, a contradiction.
Without loss of generality, assume that~$x$ is adjacent to~$v_4$.
Now~$x$ must be adjacent to~$v_5$, otherwise $G[v_1,v_4,y,x,v_5]$ would be a $\overline{P_2+P_3}$, a contradiction.
Therefore $x \in V_{1,2,3,4,5}$, a contradiction.
It follows that~$A$ must indeed be complete to~$V_{1,2,3,4,5}$.

Let $x \in I$.
By definition of~$I$, there is a vertex $y \in V_{1,2,3,4,5}$ that is non-adjacent to~$x$.
We claim that~$c(y)$ can also be used to colour~$x$.
Indeed, every neighbour of~$x$ lies in $A \cup V_{1,2,3,4,5} \setminus \{y\}$.
Since~$V_{1,2,3,4,5}$ is a clique and~$A$ is complete to~$V_{1,2,3,4,5}$, no vertex of $A \cup V_{1,2,3,4,5} \setminus \{y\}$ is coloured with the colour~$c(y)$.
We may therefore colour~$x$ with the colour~$c(y)$.
Repeating this process, we can extend the $k$-colouring of $G \setminus I$ to a $k$-colouring of~$G$.
This completes the proof of Claim~\ref{clm:V_0-V_12345-triv}.

\medskip
\noindent
By Claim~\ref{clm:V_0-V_12345-triv}, our algorithm may safely remove all vertices of~$I$ from~$G$ without changing the chromatic number.
Hence we may assume that the following claim holds.

\clmnonewline{\label{clm:V_0-V_12345-trivb}$V_\emptyset$ is complete to~$V_{1,2,3,4,5}$.}

\medskip
\noindent
Our algorithm has finished the pre-processing. We are now ready to prove, through a series of further claims, that~$G$ has {\bf bounded clique-width}.
Hence our algorithm can colour~$G$ in polynomial time by Theorem~\ref{thm:colouring-for-bdd-cw}. 

\clm{\label{clm:V_13-or-V123-small}Suppose $S,T \subseteq \{1,2,3,4,5\}$ with $|S|=2$ and $|T|=3$. Then at least one of~$V_S$ and~$V_T$ is small.}
Suppose for contradiction that there are large sets~$V_S$ and~$V_T$ for $S,T \subseteq \{1,2,3,4,5\}$ with $|S|=2$ and $|T|=3$.
By Claim~\ref{clm:V1cupV12-small} and symmetry, we may assume that $S=\{1,3\}$.
By Claim~\ref{clm:V2345cupV245-small} and symmetry, we may assume that $T = \{i,i+1,i+2\}$ for some $i \in \{1,2,3\}$.
We consider each of these cases in turn.

Suppose $x \in V_{1,3}$ is non-adjacent to $y \in V_{1,2,3}$.
Then $G[v_1,v_3,v_2,x,y]$ is a $\overline{P_2+P_3}$, a contradiction.
Therefore~$V_{1,3}$ is complete to~$V_{1,2,3}$.
Suppose $x,x' \in V_{1,3}$ and $y \in V_{1,2,3}$.
Then~$y$ must be adjacent to both~$x$ and~$x'$.
By Claim~\ref{clm:V13-indep}, $x$ is non-adjacent to~$x'$.
Therefore $G[v_4,v_5,x,y,x']$ is a $P_2+\nobreak P_3$, a contradiction.
It follows that if~$V_{1,3}$ is large then~$V_{1,2,3}$ is empty.

Suppose $x \in V_{1,3}$ is adjacent to $y \in V_{2,3,4}$.
Then $G[x,v_2,v_3,v_1,y]$ is a $\overline{P_2+P_3}$, a contradiction.
Therefore~$V_{1,3}$ is anti-complete to~$V_{2,3,4}$.
Suppose $x,x' \in V_{1,3}$ and $y \in V_{2,3,4}$.
Then~$y$ is non-adjacent to both~$x$ and~$x'$.
By Claim~\ref{clm:V13-indep}, $x$ is non-adjacent to~$x'$.
Therefore $G[v_4,y,x,v_1,x']$ is a $P_2+\nobreak P_3$, a contradiction.
It follows that if~$V_{1,3}$ is large then~$V_{2,3,4}$ is empty.

Suppose $x \in V_{1,3}$ is non-adjacent to $y \in V_{3,4,5}$.
Then $G[v_4,y,v_2,v_1,x]$ is a $P_2+\nobreak P_3$, a contradiction.
Therefore~$V_{1,3}$ is complete to~$V_{3,4,5}$.
Suppose $x,x' \in V_{1,3}$ and $y \in V_{3,4,5}$.
Then~$y$ must be adjacent to both~$x$ and~$x'$.
By Claim~\ref{clm:V13-indep}, $x$ is non-adjacent to~$x'$.
Therefore $G[x,x',y,v_1,v_3]$ is a $\overline{P_2+\nobreak P_3}$, a contradiction.
It follows that if~$V_{1,3}$ is large then~$V_{3,4,5}$ is empty.
This completes the proof of Claim~\ref{clm:V_13-or-V123-small}.

\medskip
\noindent
Claim~\ref{clm:V_13-or-V123-small}
implies that we can have large sets~$V_S$ with $|S|=2$ or large sets~$V_T$ with $|T|=3$ (or neither), but not both.
Again, since $\overline{C_5}=C_5$, the graph~$\overline{G}$ contains a~$C_5$ on vertices $v_1,v_3,v_5,v_2,v_4$, in that order.
A vertex in~$G$ with exactly two (resp. three) neighbours in~$C$ will have exactly three (resp. two) neighbours in this new cycle in~$\overline{G}$.
Therefore, if there is a large set~$V_T$ with $|T|=3$ then~$V_S$ is small whenever $|S|=2$. In this case we replace~$G$ by~$\overline{G}$ and replace~$C$ by this new cycle (we may do this by Fact~\ref{fact:comp}).
We may therefore assume that~$V_T$ is small for all~$T$ with $|T|=3$.
Note that after doing this we no longer have the symmetry between the situation for~$G$ and that for~$\overline{G}$.
Indeed, in Claim~\ref{clm:V_0-V_12345-triv} we showed that because~$G$ has no clique separator, we may assume that~$V_\emptyset$ is complete to~$V_{1,2,3,4,5}$ (Claim~\ref{clm:V_0-V_12345-trivb}).
However, we cannot guarantee that~$\overline{G}$ has no clique separator.
Therefore, if we do complement~$G$ above, then the sets~$V_\emptyset$ and~$V_{1,2,3,4,5}$ will be swapped and will become anti-complete to each other, 
instead of complete to each other.

Suppose~$V_S$ is small for some set $S \subseteq \{1,2,3,4,5\}$.
Then, by Fact~\ref{fact:del-vert}, we may delete the vertices of~$V_S$.
We may therefore assume that for each $S \subseteq \{1,2,3,4,5\}$, $V_S$ is either large or empty.
By Claims~\ref{clm:V1cupV12-small} and~\ref{clm:V2345cupV245-small} 
and our assumption that~$V_T$ is small for all~$T$ with $|T|=3$
it follows that the only sets~$V_S$ that can be large are $V_\emptyset$, $V_{1,2,3,4,5}$ and~$V_{i,i+2}$ for $i \in \{1,2,3,4,5\}$.

\begin{sloppypar}
\clm{\label{clm:V13-V12345-comp}$V_{1,2,3,4,5}$ is complete to~$V_{i,i+2}$ for all $i \in \{1,2,3,4,5\}$.}
Suppose for contradiction that $x \in V_{1,3}$ is non-adjacent to $y \in V_{1,2,3,4,5}$.
Then $G[v_1,v_3,v_2,x,y]$ is a $\overline{P_2+\nobreak P_3}$, a contradiction.
Claim~\ref{clm:V13-V12345-comp} follows by symmetry.
\end{sloppypar}

\begin{sloppypar}
\clm{\label{clm:V12345-empty}We may assume  
that~$V_{1,2,3,4,5}$ is empty.}
By Claim~\ref{clm:V13-V12345-comp}, $V_{1,2,3,4,5}$ is complete to each set~$V_{i,i+2}$.
By Claim~\ref{clm:V_0-V_12345-trivb}, $V_\emptyset$ is complete to~$V_{1,2,3,4,5}$ or, if we complemented~$G$ then it is anti-complete to~$V_{1,2,3,4,5}$.
By Fact~\ref{fact:bip}, 
we may apply a bipartite complementation between~$V_{1,2,3,4,5}$ and $V(C) \cup \bigcup_{i \in \{1,2,3,4,5\}} V_{i,i+2}$.
If~$V_\emptyset$ is complete to~$V_{1,2,3,4,5}$ then we apply a bipartite complementation between~$V_{1,2,3,4,5}$ and~$V_\emptyset$.
These operations disconnect $G[V_{1,2,3,4,5}]$ from the rest of the graph.
By Claim~\ref{clm:V12345-clique}, $G[V_{1,2,3,4,5}]$ is a complete graph, so it has clique-width at most~$2$.
We may therefore assume that~$V_{1,2,3,4,5}$ is empty.
This completes the proof of Claim~\ref{clm:V12345-empty}.
\end{sloppypar}

\clm{\label{clm:Vii+2-V0-matching}For $i \in \{1,2,3,4,5\}$, the edges between~$V_{i,i+2}$ and~$V_\emptyset$ form a matching.}
By symmetry, it suffices to prove the claim for $i=1$.
Note that~$V_\emptyset$ and~$V_{1,3}$ are independent sets by Claims~\ref{clm:V0-indep} and~\ref{clm:V13-indep}, respectively.
If the claim is false then there must be a vertex~$y$ in one of these sets that has two neighbours~$x$ and~$x'$ in the other set.
Then~$x$ and~$x'$ are non-adjacent, so $G[v_4,v_5,x,y,x']$ is a $P_2+\nobreak P_3$.
This contradiction completes the proof of Claim~\ref{clm:Vii+2-V0-matching}.

\clm{\label{clm:v0-special-vii+2-nbrs}For distinct $i,j \in \{1,2,3,4,5\}$, if $x \in V_{i,i+2}$ is adjacent to $y \in V_\emptyset$ and $z \in V_{j,j+2}$ is non-adjacent to~$y$ then~$x$ is adjacent to~$z$.}
Suppose, for contradiction, that the claim is false.
By symmetry, we may assume that $i=1$, $j \in \{2,3\}$.
Indeed, suppose $x \in V_{1,3}$ is adjacent to $y \in V_\emptyset$ and $z \in V_{j,j+2}$ is non-adjacent to~$x$ and~$y$.
Then $G[x,y,v_5,v_4,z]$ or $G[x,y,v_4,v_5,z]$ is a $P_2+\nobreak P_3$ if $j=2$ or $j=3$, respectively.
This contradiction completes the proof of Claim~\ref{clm:v0-special-vii+2-nbrs}.

\clm{\label{clm:v0-empty}We may assume
that~$V_\emptyset$ is empty.}
Recall that~$V_{i,i+2}$ is an independent set for $i \in \{1,2,3,4,5\}$, by Claim~\ref{clm:V13-indep}.
For $i \in \{1,2,3,4,5\}$, let~$V_{i,i+2}'$ be the set of vertices in~$V_{i,i+2}$ that have a neighbour in~$V_\emptyset$ and let $G'= G[V_\emptyset \cup \bigcup_{i \in \{1,2,3,4,5\}} V_{i,i+2}']$.
By Claim~\ref{clm:v0-special-vii+2-nbrs}, for distinct $i,j \in \{1,2,3,4,5\}$, we find that~$V_{i,i+2}'$ is complete to $V_{j,j+2} \setminus V_{j,j+2}'$.
By Fact~\ref{fact:bip} we may apply a bipartite complementation between every such pair of sets and also between $\{v_i,v_{i+2}\}$ and~$V_{i,i+2}'$ for $i \in \{1,2,3,4,5\}$.
This disconnects~$G'$ from the rest of the graph. 
We now prove that~$G'$ has bounded clique-width.

Claim~\ref{clm:Vii+2-V0-matching} implies that for $i \in \{1,2,3,4,5\}$, every vertex in~$V_\emptyset$ has at most one neighbour in~$V_{i,i+2}'$ and every vertex in~$V_{i,i+2}'$ has exactly one neighbour in~$V_\emptyset$.
Furthermore, for distinct $i,j \in \{1,2,3,4,5\}$, if $x \in V_{i,i+2}'$ and $y \in V_{j,j+2}'$ have different neighbours in~$V_\emptyset$ then~$x$ is adjacent to~$y$ by Claim~\ref{clm:v0-special-vii+2-nbrs}.
Now let~$G''$ be the graph obtained from~$G'$ by applying a bipartite complementation between~$V_{i,i+2}'$ and~$V_{j,j+2}'$ for each pair of distinct $i,j \in \{1,2,3,4,5\}$.
Then, for distinct $i,j \in \{1,2,3,4,5\}$, if $x \in V_{i,i+2}'$ and $y \in V_{j,j+2}'$ are adjacent in~$G''$ then they must have the same unique neighbour in~$V_\emptyset$ in the graph~$G''$.
Therefore every component of~$G''$ contains
exactly
one vertex in~$V_\emptyset$ and at most one vertex in each set~$V_{i,i+2}'$, so each component contains at most six vertices.
It follows that~$G''$ has clique-width at most~$6$.
By Fact~\ref{fact:bip}, $G'$ has bounded clique-width.
We may therefore assume that~$V_\emptyset$ is empty.
This completes the proof of Claim~\ref{clm:v0-empty}.

\medskip
\noindent
Note that in the proof of 
Claim~\ref{clm:v0-empty}
we may remove vertices from~$V_{i,i+2}$ for some $i \in \{1,2,3,4,5\}$.
If this causes a set~$V_{i,i+2}$ to become small, then again by Fact~\ref{fact:del-vert}, we may delete all remaining vertices of~$V_{i,i+2}$.
Therefore, we may again assume that each set~$V_{i,i+2}$ is either large or empty.
We now analyse the edges between these sets.

\clm{\label{clm:V13-V24-comatching}For $i \in \{1,2,3,4,5\}$, the edges between~$V_{i,i+2}$ and~$V_{i+1,i+3}$ form a co-matching.}
Suppose for contradiction that the claim is false.
Without loss of generality, assume there is a vertex~$x \in V_{1,3}$ with two non-neighbours $y,y' \in V_{2,4}$.
Then~$y$ must be non-adjacent to~$y'$ by Claim~\ref{clm:V13-indep}, so
$G[x,v_1,y,v_4,y']$ is a $P_2+\nobreak P_3$, a contradiction.
This completes the proof of Claim~\ref{clm:V13-V24-comatching}.

\clm{\label{clm:V13-adj-V24-Vjj+1-not-comp}For $i \in \{1,2,3,4,5\}$, suppose $x \in V_{i,i+2}$ is adjacent to $y \in V_{i+1,i+3}$. If $z \in V_{i+3,i}$ then~$z$ is not complete to $\{x,y\}$.}
Suppose, for contradiction that $x \in V_{1,3}$ is adjacent to $y \in V_{2,4}$ and $z \in V_{4,1}$ is complete to $\{x,y\}$.
Then $G[v_1,y,x,v_2,z]$ is a $\overline{P_2+P_3}$, a contradiction.
This completes the proof of Claim~\ref{clm:V13-adj-V24-Vjj+1-not-comp}.

\clm{\label{clm:V24-large-V13-V35-anti}For $i \in \{1,2,3,4,5\}$, if~$V_{i,i+2}$ is large then~$V_{i-1,i+1}$ is anti-complete to~$V_{i+1,i+3}$.}
Suppose the claim is false.
Without loss of generality, assume that~$V_{2,4}$ is large and $x \in V_{1,3}$ is adjacent to $z \in V_{3,5}$.
By Claim~\ref{clm:V13-V24-comatching}, the vertices~$x$ and~$z$ each have at most one non-neighbour in~$V_{2,4}$.
Since~$V_{2,4}$ is large, there must therefore be a vertex $y \in V_{2,4}$ that is adjacent to both~$x$ and~$z$.
Then $G[v_3,y,x,v_2,z]$ is a $\overline{P_2+P_3}$, a contradiction.
This completes the proof of Claim~\ref{clm:V24-large-V13-V35-anti}.

\medskip
\noindent
For distinct $i,j \in \{1,2,3,4,5\}$, we say that~$V_{i,i+2}$ and~$V_{j,j+2}$ are {\em consecutive} sets if~$v_i$ and~$v_j$ are adjacent vertices of the cycle~$C$ and {\em opposite} sets if they are not.
We consider cases depending on which sets~$V_{i,i+2}$ are large.

\thmcase{\label{case:4-Vii+2-12345-large}$V_{i,i+2}$ is large for all $i \in \{1,2,3,4,5\}$.}
By Claim~\ref{clm:V13-indep}, every set~$V_{i,i+2}$ is independent.
Then, by Claim~\ref{clm:V24-large-V13-V35-anti}, if~$V_{i,i+2}$ and~$V_{j,j+2}$ are opposite then they must be anti-complete to each other.
By Claim~\ref{clm:V13-V24-comatching}, if~$V_{i,i+2}$ and~$V_{j,j+2}$ are consecutive, then the edges between these sets form a co-matching.
By Fact~\ref{fact:del-vert}, we may delete the vertices of the cycle~$C$.
Applying a bipartite complementation between each pair of consecutive sets~$V_{i,i+2}$ and~$V_{j,j+2}$, we therefore obtain a graph of maximum degree~$2$.
By Lemma~\ref{lem:atmost-2} and Fact~\ref{fact:bip}, $G$ has bounded clique-width.
This completes Case~\ref{case:4-Vii+2-12345-large}.

\newpage
\thmcase{\label{case:4-Vii+2-1234-123-large}$V_{i,i+2}$ is large for $i \in \{1,2,3\}$ and empty for $i=5$. ($V_{i,i+2}$ may be large or empty for $i=4$.)}
By Claim~\ref{clm:V13-indep}, every set~$V_{i,i+2}$ is independent.
Then, by Claim~\ref{clm:V24-large-V13-V35-anti}, $V_{1,3}$ is anti-complete to~$V_{3,5}$ and~$V_{2,4}$ is anti-complete to~$V_{4,1}$ (note that this also holds in the case where $V_{4,1}=\emptyset$).
Therefore $V_{1,3} \cup V_{3,5}$ and $V_{2,4} \cup V_{4,1}$ are independent sets, so $G[V_{1,3} \cup V_{3,5} \cup V_{2,4} \cup V_{4,1}]$ is a bipartite $(P_2+\nobreak P_3)$-free graph, and thus has bounded clique-width by Lemma~\ref{lem:bipartite}.
By Fact~\ref{fact:del-vert}, we may delete the vertices of the cycle~$C$.
All remaining vertices are in $V_{1,3} \cup V_{3,5} \cup V_{2,4} \cup V_{1,4}$.
Therefore~$G$ has bounded clique-width.
This completes Case~\ref{case:4-Vii+2-1234-123-large}.

\thmcase{\label{case:4-Vii+2-two-large}$V_{i,i+2}$ is large for at most two values $i \in \{1,2,3,4,5\}$.}
By Claim~\ref{clm:V13-indep}, every set~$V_{i,i+2}$ is independent.
By Fact~\ref{fact:del-vert}, we may delete the vertices of the cycle~$C$.
Therefore the remainder of the graph consists of the at most two sets~$V_{i,i+2}$, and is therefore a bipartite $(P_2+\nobreak P_3)$-free graph, so it has bounded clique-width by Lemma~\ref{lem:bipartite}.
Therefore~$G$ has bounded clique-width.
This completes Case~\ref{case:4-Vii+2-two-large}.

\medskip
\noindent
By symmetry, only one case remains:

\thmcase{\label{case:4-Vii+2-124-large}$V_{i,i+2}$ is large for $i \in \{1,2,4\}$ and empty for $i \in \{3,5\}$.}
We consider two subcases:

\thmsubcase{\label{case:4-Vii+2-1234-123-large-no-V41-nbr-V13-V24}No vertex of~$V_{4,1}$ has a neighbour in both~$V_{1,3}$ and~$V_{2,4}$.}
By Claim~\ref{clm:V13-indep}, $V_{1,3}$, $V_{2,4}$ and~$V_{4,1}$ are independent.
By Fact~\ref{fact:del-vert}, we may delete the vertices of the cycle~$C$.
All remaining vertices belong to $V_{1,3} \cup V_{2,4} \cup V_{4,1}$.
Let~$V_{4,1}^*$ be the set of vertices in~$V_{4,1}$ that have a neighbour in~$V_{1,3}$ and let $V_{4,1}^{**}=V_{4,1} \setminus V_{4,1}^*$.
Note that if a vertex $x \in V_{4,1}$ has a neighbour in~$V_{2,4}$ then $x \in V_{4,1}^{**}$.
Now $V_{2,4} \cup V_{4,1}^*$ and $V_{1,3} \cup V_{4,1}^{**}$ are independent sets, so $G[V_{2,4} \cup V_{4,1}^* \cup V_{1,3} \cup V_{4,1}^{**}]$ is a bipartite $(P_2+\nobreak P_3)$-free graph and thus has bounded clique-width by Lemma~\ref{lem:bipartite}.
Therefore~$G$ has bounded clique-width.
This completes Case~\ref{case:4-Vii+2-1234-123-large-no-V41-nbr-V13-V24}.

\thmsubcase{\label{case:4-Vii+2-1234-123-large-yes-V41-nbr-V13-V24}There is a vertex $x \in V_{4,1}$ with neighbours $y \in V_{1,3}$ and $z \in V_{2,4}$.}
By Claim~\ref{clm:V13-indep}, $V_{1,3}$, $V_{2,4}$ and~$V_{4,1}$ are independent.
Then, by Claim~\ref{clm:V13-adj-V24-Vjj+1-not-comp}, $y$ must be non-adjacent to~$z$.
By Claim~\ref{clm:V13-indep}, $V_{1,3}$, $V_{2,4}$ and~$V_{4,1}$ are independent.
By Claim~\ref{clm:V13-V24-comatching}, the edges between~$V_{1,3}$ and~$V_{2,4}$ form a co-matching, so~$y$ is complete to $V_{2,4} \setminus \{z\}$ and~$z$ is complete to $V_{1,3}\setminus \{y\}$.
By Claim~\ref{clm:V13-adj-V24-Vjj+1-not-comp}, it follows that~$x$ has no neighbours in $(V_{1,3}\setminus \{y\}) \cup (V_{2,4} \setminus \{z\})$.
Then, since~$V_{1,3}$ and~$V_{2,4}$ are large, there must be adjacent vertices $y' \in V_{1,3}\setminus \{y\}$ and $z' \in V_{2,4}\setminus \{z\}$.
Now $G[x,y,z',y',z]$ is a~$C_5$, with vertices in that order; we denote this cycle by~$C'$.
By Fact~\ref{fact:del-vert}, we may delete the vertices of the original cycle~$C$.
Repeating the arguments of Claim~\ref{clm:V1cupV12-small}, but applied with respect to the cycle~$C'$ instead of~$C$, we find that at most five vertices outside~$C'$ have exactly one neighbour in~$C'$.
By Fact~\ref{fact:del-vert}, we may delete any such vertices.

Let~$y'' \in V_{1,3} \setminus \{y,y'\}$.
By Claim~\ref{clm:V13-indep}, $y''$ is non-adjacent to~$y$ and~$y'$.
Since~$z$ is complete to $V_{1,3}\setminus \{y\}$, it follows that~$y''$ is adjacent to~$z$.
By Claim~\ref{clm:V13-adj-V24-Vjj+1-not-comp} and the fact that~$z$ is adjacent to~$x$ and~$y''$, it follows that~$x$ cannot be adjacent to~$y''$.
Since~$y''$ cannot have exactly one neighbour on~$C'$, we conclude that~$y''$ must be adjacent to~$z'$.
Therefore every vertex of $V_{1,3} \setminus \{y,y'\}$ is adjacent to~$z$ and~$z'$, but no other vertices of~$C'$.
Similarly, every vertex of $V_{2,4} \setminus \{z,z'\}$ is adjacent to~$y$ and~$y'$, but no other vertices of~$C'$.

Let~$x' \in V_{4,1} \setminus \{x\}$.
By Claim~\ref{clm:V13-indep}, the set~$V_{4,1}$ is independent, so~$x'$ is non-adjacent to~$x$.
Therefore~$x'$ must have at least two neighbours in $\{y,y',z,z'\}$, but by Claim~\ref{clm:V13-adj-V24-Vjj+1-not-comp}, it cannot be complete to $\{y,z'\}$, $\{y',z'\}$ or $\{y',z\}$.
Therefore the neighbourhood of~$x'$ in~$C'$ must be $\{y,y'\}$, $\{z,z'\}$ or $\{y,z\}$.
If~$x'$ is adjacent to~$y$ then by Claim~\ref{clm:V13-adj-V24-Vjj+1-not-comp} and the fact that~$y$ is complete to $V_{2,4}\setminus \{z\}$, it follows that~$x'$ is anti-complete to $V_{2,4}\setminus \{z\}$.
Similarly, if~$x'$ is adjacent to~$z$, then it is anti-complete to $V_{1,3}\setminus \{y\}$.
This means that if~$x'$ is adjacent to~$y$ and~$z$ then it has no other neighbours, so it is a false twin of~$x$ and by Lemma~\ref{lem:false-twin} we may delete~$x'$ in this case.
Therefore~$x'$ is either adjacent to~$y$ and~$y'$, in which case~$x'$ is anti-complete to~$V_{2,4}$, or $x'$ is adjacent to~$z$ and~$z'$, in which case $x'$ is anti-complete to~$V_{1,3}$.

Let~$V_{4,1}^*$ be the set of vertices in~$V_{4,1}$ that are adjacent to~$y$ and~$y'$ and let~$V_{4,1}^{**}$ be the remaining vertices of~$V_{4,1}$.
Now $V_{2,4} \cup V_{4,1}^*$ and $V_{1,3} \cup V_{4,1}^{**}$ are independent sets.
Deleting the vertex~$x$, we obtain the graph $G[V_{2,4} \cup V_{4,1}^* \cup V_{1,3} \cup V_{4,1}^{**}]$, which is a bipartite $(P_2+\nobreak P_3)$-free graph and thus has bounded clique-width by Lemma~\ref{lem:bipartite}.
By Fact~\ref{fact:del-vert}, it follows that~$G$ has bounded clique-width.
This concludes Case~\ref{case:4-Vii+2-1234-123-large-yes-V41-nbr-V13-V24} and therefore Case~\ref{case:4-Vii+2-124-large}.

\medskip
\noindent
We have proved that~$G$ has bounded clique-width.
By Theorem~\ref{thm:colouring-for-bdd-cw}, this completes the proof of the theorem.\qedllncs
\end{proof}

\subsection{The Final Proof}\label{s-final}

We are now ready to prove Theorem~\ref{t-mainmain}.

\medskip
\noindent
\faketheorem{Theorem~\ref{t-mainmain} (restated).} {\em Let~$H$ be a graph with $H,\overline{H} \notin \{(s+\nobreak 1)P_1+\nobreak P_3, sP_1+\nobreak P_4\; |\; s\geq 2\}$.
Then {\sc Colouring} is polynomial-time solvable for $(H,\overline{H})$-free graphs if
$H$ or $\overline{H}$ is an induced subgraph of $K_{1,3}$,
$P_1+\nobreak P_4$,
$2P_1+\nobreak P_3$,
$P_2+\nobreak P_3$,
$P_5$, or $sP_1+\nobreak P_2$ for some $s\geq 0$
and it is  \NP-complete otherwise.}

\begin{proof}
We first consider the polynomial-time cases.
The {\sc Colouring} problem is solvable in polynomial time for
$(K_{1,3},\overline{K_{1,3}})$-free graphs~\cite{Ma13},
$(P_1+\nobreak P_4,\overline{P_1+P_4})$-free graphs (by Theorem~\ref{thm:colouring-for-bdd-cw} combined with the fact that they have bounded clique-width~\cite{BLM04b}),
$(2P_1+\nobreak P_3,\overline{2P_1+P_3})$-free graphs~\cite{BDJLPZ},
$(P_2+\nobreak P_3,\overline{P_2+P_3})$-free graphs (by Theorem~\ref{thm:P2P_3-co-P_2+P_3-free}),
$(P_5,\overline{P_5})$-free graphs~\cite{HL} and
$(sP_1+\nobreak P_2, \overline{sP_1+P_2})$-free graphs for $s\geq 0$~\cite{DGP14}.

Let~$H$ be a graph and suppose that {\sc Colouring} is not \NP-complete for $(H,\overline{H})$-free graphs and that neither~$H$ nor~$\overline{H}$ is isomorphic to $(s+1)P_1+\nobreak P_3$ or $sP_1+\nobreak P_4$ for $s \geq 2$.
We will show that one of the polynomial-time cases above holds.
For any $p \geq 3$, {\sc Colouring} is \NP-complete for $(C_3,\ldots,C_p)$-free graphs due to Lemma~\ref{lem:col-cycle-free}.
Therefore, we may assume without loss of generality that~$H$ is a forest.

First suppose that~$H$ contains a vertex of degree at least~$3$.
Then $K_{1,3} \ssi H$.
We may assume~$H$ contains a vertex~$x$ not in this~$K_{1,3}$, otherwise we are done.
Since~$H$ is a forest, $x$ can have at most one neighbour on the~$K_{1,3}$.
Then $\overline{2P_1+P_2} \ssi \overline{H}$ if~$x$ is adjacent to a leaf vertex of the~$K_{1,3}$ and $K_4 \ssi \overline{H}$ if it is not.
This means that the class of $(H,\overline{H})$-free graphs contains the class of $(K_{1,3},K_4,\overline{2P_1+P_2})$-free graphs for which {\sc Colouring} is \NP-complete~\cite{KKTW01}.
We may therefore assume that~$H$ does not contain a vertex of degree~$3$, so~$H$ must be a linear forest.

Now~$H$ must be $(P_1+\nobreak 2P_2,2P_3,P_6)$-free, otherwise the problem is \NP-complete on the class of $(H,\overline{H})$-free graphs by Theorem~\ref{thm:P1+2P2-2P3-P6-and-complements-free-NPC}.
Let $H_1,\ldots,H_r$ be the components of~$H$ with $|V(H_1)|\geq\cdots\geq|V(H_r)|$ for some $r\geq 1$ and note that each component~$H_i$ is isomorphic to a path.

If $r \geq 2$ then $|V(H_2)| \leq 2$ since~$H$ is $2P_3$-free.
Suppose $|V(H_2)| = 2$.
Then $r \leq 2$ and $|V(H_1)| \leq 3$, since~$H$ is $(P_1+\nobreak 2P_2)$-free.
This means that $H \ssi P_2 +\nobreak P_3$, so we are done.
We may therefore assume that all components apart from~$H_1$ are trivial, that is, $H=sP_1+\nobreak P_t$ for some $s,t \geq 0$.
Now $t \leq 5$, since~$H$ is $P_6$-free.
If $t=5$ then $s=0$ since~$H$ is $(P_1+\nobreak 2P_2)$-free, so $H=P_5$ and we are done.
If $t=4$ then $s \leq 1$ by assumption, so $H \ssi P_1+P_4$ and we are done. 
If $t=3$ then $s \leq 2$ by assumption, so $H \ssi 2P_1+P_3$ and we are done.
If $t \leq 2$ then $H \ssi sP_1+\nobreak P_2$ for some $s \geq 0$ and we are done.
This completes the proof.\qedllncs
\end{proof}

\section{The Proofs of Theorems~\ref{thm:4-col-overP8} and~\ref{thm:5-col-overP8}}\label{sec:k-col}

To prove Theorems~\ref{thm:4-col-overP8} and~\ref{thm:5-col-overP8}, we use a construction introduced by Huang~\cite[Section~2]{Hu16}. 
Recall that the chromatic number and clique number of a graph $G$ are denoted by $\chi(G)$ and $\omega(G)$, respectively.
A graph~$G$ is $k$-\emph{critical} if and only if $\chi(G) = k$ and $\chi(G-v) < k$ for every vertex~$v$ in~$G$.
A $k$-critical graph~$G$ is \emph{nice} if and only if it contains an independent set on three vertices $c_1,c_2,c_3$, such that $\omega(G) = \omega(G - \{c_1,c_2,c_3\}) = k-1$.

We describe the construction of Huang.
Let~$I$ be an instance of {\sc $3$-Sat} with variables $X = \{x_1,\dots,x_n\}$ and clauses $\mathcal{C} = \{C^1,\dots,C^m\}$.
We may assume that the variables in each clause are pairwise distinct.
Let~$H$ be a nice $k$-critical graph.
By definition~$H$ contains three pairwise non-adjacent vertices $c_1,c_2,c_3$ such that $\omega(H) = \omega(H - \{c_1,c_2,c_3\}) = k-1$.
We construct the graph~$G_{H,I}$ as follows:
\begin{itemize}
\item For each variable~$x_i$, we create a pair of literal vertices~$x_i$ and~$\overline{x_i}$ joined by an edge.
We call these vertices $X$-type.
\item For each variable~$x_i$, we create a variable vertex~$d_i$.
We call these vertices $D$-type.
\item For each clause~$C^j$, we create a subgraph~$H_j$ isomorphic to~$H$.
The three vertices of~$H_j$ corresponding to~$c_1,c_2,c_3$ are denoted $c_1^j,c_2^j,c_3^j$, respectively, and are called $C$-type; they are understood to each represent a distinct literal of~$C^j$.
The remaining vertices of~$H_j$ are called $U$-type.
\end{itemize}
We have the following additional adjacencies:
\begin{itemize}
\item Every $U$-type vertex is adjacent to every $D$-type and every $X$-type vertex.
\item Every $C$-type vertex is adjacent to the $X$-type vertex and $D$-type vertex that represent the corresponding literal and variable.
\end{itemize}

The following are~\cite[Lemma~1]{Hu16} and~\cite[Lemma~2]{Hu16}.
\begin{lemma}[\cite{Hu16}]\label{lem:huang1}
Let~$H$ be a nice $k$-critical graph.
An instance~$I$ of {\sc $3$-Sat} is satisfiable if and only if~$G_{H,I}$ is $(k+1)$-colourable.
\end{lemma}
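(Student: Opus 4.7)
The plan is to establish both directions by exploiting a palette split in any $(k+1)$-coloring: a pair of ``assignment'' colors used on all $X$-type and $D$-type vertices, and a block of $k-1$ ``padding'' colors used on all $U$-type vertices. The fact that every $U$-type vertex is complete to every $X$- and $D$-type vertex forces these two color sets to be disjoint, so the niceness and $k$-criticality of $H$ translate directly into the satisfiability structure of $I$.

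For the $(\Leftarrow)$ direction, I would start from a $(k+1)$-coloring $c$ of $G_{H,I}$, let $A$ be the set of colors appearing on $X \cup D$ and $B$ the set appearing on $U$. Then $A\cap B=\emptyset$ by the U-to-XD completeness. Since $H_j\setminus\{c_1^j,c_2^j,c_3^j\}$ is isomorphic to $H-\{c_1,c_2,c_3\}$ and has clique number $k-1$ by niceness, the $U$-type inside each $H_j$ must use at least $k-1$ colors, while the edge $x_i\overline{x_i}$ forces $|A|\geq 2$; disjointness inside $\{1,\ldots,k+1\}$ therefore pins $|A|=2$ and $|B|=k-1$. I then define the assignment by declaring the literal $x_i$ true iff $c(x_i)=c(d_i)$, which is well-defined because both values lie in the two-element set $A$ and $c(x_i)\neq c(\overline{x_i})$. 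For each clause, if all three $c_l^j$ were colored from $B$, then the subgraph $H_j$ would be colored with only $|B|=k-1$ colors, contradicting $\chi(H)=k$; hence some $c_l^j$ receives a color in $A$, and this forces its two $A$-colored neighbours (the $X$- and $D$-vertex of its literal) to share a color, exhibiting a satisfied literal in that clause.

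For the $(\Rightarrow)$ direction, given a satisfying assignment I would set $A=\{k,k+1\}$ and $B=\{1,\ldots,k-1\}$, color every $d_i$ with $k$, color the true literal vertex with $k$ and the false literal vertex with $k+1$, and then extend to each $H_j$ independently. For each clause pick one true literal whose $C$-vertex is, say, $c_l^j$: color $H_j-c_l^j$ with $k-1$ colors from $B$ (such a coloring exists by $k$-criticality of $H$, since $\chi(H-c_l)=k-1$) and assign $c_l^j$ the color $k+1$. The external adjacencies check out because $A\cap B=\emptyset$ handles all U-to-XD edges, the ``true'' vertex $c_l^j$ avoids its two $A$-colored neighbours (both colored $k$), and every other $c_{l'}^j$ lies in $B$ and hence trivially avoids its $A$-colored neighbours; within $H_j$ propriety is built in.

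The main conceptual obstacle is showing that the niceness hypothesis is genuinely used in the $(\Leftarrow)$ direction: without the clique of size $k-1$ inside $H-\{c_1,c_2,c_3\}$, we would only know $|B|\leq k-1$ from $A\cap B=\emptyset$, not $|B|\geq k-1$, so the crucial pinning $|A|=2$ would be lost and the two-color truth palette would not be forced. Once that pinning is in hand, the remaining verifications --- that $k$-criticality supplies the needed $(k-1)$-colorings of $H-c_l$, and that all inter-$H_j$ and U-to-XD adjacencies collapse under the $A/B$ separation --- are routine bookkeeping.
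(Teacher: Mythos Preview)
The paper does not supply its own proof of this lemma: it is quoted verbatim from Huang~\cite{Hu16} and used as a black box. Your argument is correct and is exactly the intended one for this construction --- the palette split into two ``assignment'' colours on $X\cup D$ and $k-1$ ``padding'' colours on $U$, forced by the $(k-1)$-clique inside $H-\{c_1,c_2,c_3\}$ together with the complete bipartite join between $U$ and $X\cup D$, is precisely what niceness is engineered to guarantee, and your reading of the truth assignment from $c(x_i)=c(d_i)$ and the per-clause use of $k$-criticality to colour $H_j-c_l^j$ with $k-1$ colours are the standard steps.
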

\begin{lemma}[\cite{Hu16}]\label{lem:huang2}
Let~$H$ be a nice $k$-critical graph.
If~$H$ is $P_t$-free for any integer $t\geq 6$, then~$G_{H,I}$ is also $P_t$-free.
\end{lemma}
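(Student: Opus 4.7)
The plan is to suppose, for contradiction, that $G_{H,I}$ contains an induced copy $P = v_1 v_2 \cdots v_t$ of $P_t$ with $t \geq 6$, partition the vertices of $P$ by type into $U_P, X_P, D_P, C_P$, and derive a contradiction by bounding $|V(P)|$ strictly below $t$ in every configuration.

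First I would collect the structural facts that drive the argument. By niceness, the three $C$-type vertices of each copy $H_j$ are pairwise non-adjacent, and since distinct copies share no edges, the entire set of $C$-type vertices is independent in $G_{H,I}$; similarly the $D$-type set is independent, the $X$-type set is a matching (one edge per variable), and there are no edges between $X$-type and $D$-type vertices. Crucially, every $U$-type vertex is adjacent to every vertex of $X\cup D$. Using induced\-ness of $P$, this forces every vertex of $X_P\cup D_P$ to be a path-neighbour of every vertex of $U_P$, and symmetrically. Hence: if $U_P\neq\emptyset$ then $|X_P\cup D_P|\leq 2$; if $X_P\cup D_P\neq\emptyset$ then $|U_P|\leq 2$; and the joint case $|U_P| = |X_P\cup D_P| = 2$ is impossible, since it would embed a $K_{2,2}$ into $P$.

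The argument then splits by $(|U_P|, |X_P\cup D_P|)$. If $U_P = \emptyset$, the sub\-graph of $G_{H,I}$ induced on $X\cup D\cup C$ decomposes into components indexed by variables (the component of $x_i$ being $\{x_i, \overline{x_i}, d_i\}$ together with the $C$-vertices representing a literal of $x_i$), so $P$ lies in a single such component; using that $d_i$ is adjacent to every $C$-vertex of its component, a short case check bounds the longest induced path in such a component by $4$ vertices, contradicting $t\geq 6$. If $U_P\neq\emptyset$ and $X_P\cup D_P = \emptyset$, then $P$ cannot leave the copy $H_j$ containing any $U$-vertex of $P$: every edge out of $H_j$ passes through $X\cup D$, which is empty on $P$. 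Hence $P\subseteq V(H_j)$ realises an induced $P_t$ in $H$, contradicting the $P_t$-freeness hypothesis.

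In the remaining sub-cases $(|U_P|, |X_P\cup D_P|)\in\{(1,1), (1,2), (2,1)\}$, I would pin down the ``core'' of $P$ on $U_P\cup X_P\cup D_P$: induced\-ness forces it to be the subpath $u$--$x^*$--$w$ in case $(2,1)$, $x^*$--$u$--$y^*$ in case $(1,2)$, and $u$--$x^*$ in case $(1,1)$. For each outer end of the core I would then argue that $P$ can be extended by at most one further vertex, necessarily of $C$-type, and that this $C$-vertex must be an endpoint of $P$: its only candidate path-neighbours are in $X\cup D$ (already exhausted by the core), additional $U$-vertices (ruled out by the bound on $|U_P|$), or another $C$-vertex (ruled out by $C$-indep\-endence, even within a single copy by niceness). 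This gives $|V(P)|\leq 5$ in each sub-case, so $|V(P)| < t$ in every configuration---the desired contradiction. The main obstacle is this last bookkeeping step: in each sub-case one must combine all the structural facts from the first paragraph---absence of $X$--$D$ edges, $C$-indep\-endence across copies, and the independence of the $c_k^j$ within one copy guaranteed by niceness---to verify that the outer path-neighbour of the core really is forced to be a $C$-type endpoint.
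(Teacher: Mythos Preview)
The paper does not supply its own proof of this lemma: it is quoted as \cite[Lemma~2]{Hu16} and used as a black box in the proofs of Theorems~\ref{thm:4-col-overP8} and~\ref{thm:5-col-overP8}. So there is no in-paper argument to compare your proposal against.

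That said, your argument is correct and self-contained. The structural facts you list (independence of all $C$-type vertices, completeness of $U$ to $X\cup D$, absence of $X$--$D$ edges, and the per-variable component decomposition of $G_{H,I}[X\cup D\cup C]$) all hold for the construction exactly as described, and the case split on $(|U_P|,|X_P\cup D_P|)$ is exhaustive once $(2,2)$ is eliminated via the $K_{2,2}$ obstruction. In the $U_P=\emptyset$ case your $P_4$ bound on a variable-component is right (any induced path through $d_i$ uses at most two $C$-vertices, and any induced path avoiding $d_i$ lives in a double star), and in the $X_P\cup D_P=\emptyset$ case connectivity through non-$X\cup D$ edges does confine $P$ to a single $H_j$. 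In the mixed sub-cases your ``core'' is forced to be a block of consecutive path-vertices because $U$ is complete to $X\cup D$, and then every vertex of $P$ outside the core is $C$-type; independence of $C$ then allows at most one such vertex on each side of the core. This gives $|V(P)|\le 5$, which is precisely where the hypothesis $t\ge 6$ enters. One harmless over-count: in the $(1,1)$ sub-case you actually get $|V(P)|\le 4$, not $5$. The ``main obstacle'' you flag is not really one---once the $U$- and $X\cup D$-quotas are exhausted, $C$-independence alone forces any further vertex to be a $C$-type endpoint.
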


\medskip
\noindent
\faketheorem{Theorem~\ref{thm:4-col-overP8} (restated).}
{\em $4$-{\sc Colouring} is \NP-complete for $(P_7,\overline{P_8})$-free graphs.}

\begin{proof}
It follows from Lemma~\ref{lem:huang1} that we need only exhibit a nice $3$-critical graph~$H$ such that~$G_{H,I}$ is $(P_7,\overline{P_8})$-free for any {\sc $3$-Sat} instance~$I$.
We claim that~$C_7$ will suffice.
This is the graph used by Huang~\cite[Theorem~6]{Hu16} to show that {\sc $4$-Colouring} is \NP-complete for~$P_7$-free graphs.
He noted (and it is trivial to check) that~$C_7$ is a $P_7$-free nice $3$-critical graph and so, by Lemma~\ref{lem:huang2}, $G_{C_7,I}$ is also $P_7$-free.
It only remains to show that $G_{C_7,I}$ is $\overline{P_8}$-free.
Suppose, for contradiction, that~$G_{C_7,I}$ contains a~$\overline{P_8}$, whose vertex set we denote~$P$, as an induced subgraph.

We observe that any four vertices of~$P$ induce at least three edges and that any three vertices induce at least one edge.
So~$P$ cannot contain four vertices that are each either $X$-type or $D$-type as they would induce at most two edges.

So~$P$ contains at least five vertices that belong to copies of~$C_7$.
We recall that vertices from distinct copies of~$C_7$ are not adjacent in~$G_{C_7,I}$.
So if three of these five vertices belong to three distinct copies of~$C_7$, they induce no edge, a contradiction.
Hence there must be a copy of~$C_7$ that contains at least three of the vertices. As two of them must be non-adjacent, no other vertex of the five can belong to another copy of~$C_7$, otherwise we again have three vertices that induce no edge.
Therefore the five vertices must all belong to the same copy.
Considering the subgraphs of~$C_7$ on five vertices, we see that these five vertices induce one of $P_1+\nobreak P_4$, $P_2+\nobreak P_3$ and~$P_5$, each of which contains an independent set on three vertices.
Therefore~$P$ contains an independent set on three vertices.
This contradiction completes the proof.\qedllncs
\end{proof}

\noindent
\faketheorem{Theorem~\ref{thm:5-col-overP8} (restated).}
{\em $5$-{\sc Colouring} is \NP-complete for $(P_6,\overline{P_1+P_6})$-free graphs.}

\begin{proof}
It follows from Lemma~\ref{lem:huang1} that we need only to exhibit a nice $4$-critical graph~$H$ such that~$G_{H,I}$ is $(P_6,\overline{P_1+P_6})$-free for any {\sc $3$-Sat} instance~$I$.
We claim that the graph~$H$ of \figurename~\ref{fig:h} will suffice.
This is the graph used by Huang~\cite[Theorem~5]{Hu16} to show that {\sc $5$-Colouring} is \NP-complete for $P_6$-free graphs.
He noted, and it is easy to verify, that~$H$ is a $P_6$-free nice $4$-critical graph and so, by Lemma~\ref{lem:huang2}, $G_{H,I}$ is also $P_6$-free.
It only remains to show that~$G_{H,I}$ is $\overline{P_1+P_6}$-free.
Suppose, for contradiction, that~$G_{H,I}$ contains~$\overline{P_1+P_6}$ as an induced subgraph.
Thus~$G_{H,I}$ contains a set, denoted~$P$, of six vertices that induce a~$\overline{P_6}$ and a further vertex~$q$ adjacent to every vertex of~$P$.  

We observe that any four vertices of~$P$ induce at least three edges and that any three vertices induce at least one edge.

\begin{figure}
\begin{center}
\begin{tikzpicture}[scale=0.8]
	
	\tikzstyle{every node}=[minimum width=0.7cm]

	\node[circle,draw=black,fill=white] (c_1)  at (3, 5.2)   {$c_1$};
	\node[circle,draw=black,fill=white] (c_2)  at (0, 0)   {$c_2$};
	\node[circle,draw=black,fill=white] (c_3)  at (6, 0)   {$c_3$};
	\node[circle,draw=black,fill=white] (b)  at (3, 1.73)   {$b$};
	\node[circle,draw=black,fill=white] (e)  at (1.5, 2.6)   {$e$};
	\node[circle,draw=black,fill=white] (f)  at (4.5, 2.6)   {$f$};
	\node[circle,draw=black,fill=white] (g)  at (3, 0) {$g$};
	
	\draw (c_1) -- (b);
	\draw (c_2) -- (b);
	\draw (c_3) -- (b);
	\draw (c_1) -- (e);
	\draw (c_1) -- (f);
	\draw (c_2) -- (e);
	\draw (c_2) -- (g);
	\draw (c_3) -- (f);
	\draw (c_3) -- (g);
	\draw (e) -- (g);
	\draw (f) -- (g);
	\draw (e) -- (f);

\end{tikzpicture}
\end{center}
\caption{The graph~$H$ used in the proof of Theorem~\ref{thm:5-col-overP8}.}
\label{fig:h}
\end{figure}
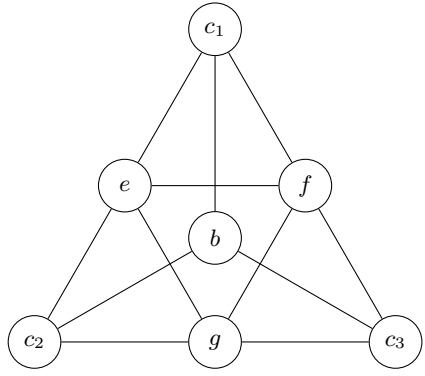

Suppose that~$q$ is either $D$-type or $X$-type.  Then~$q$ has at most one neighbour that is also $D$-type or $X$-type, and so~$P$ contains at least five vertices that belong to copies of~$H$.
Vertices from distinct copies of $H$ are not adjacent in $G_{H,I}$.
If three of these five vertices belong to three distinct copies of $H$, they induce no edge, a contradiction.  Hence the five vertices belong to at most two distinct copies of $H$.  If there are two copies of $H$ that each contain at least two of the five vertices, then there are four vertices that induce at most two edges, a contradiction.  Therefore at least four of the five vertices belong to the same copy of $H$ and so, in fact, they must all belong to the same copy (otherwise there is a single vertex in a distinct copy that is adjacent to none of the other four, a contradiction).
Label the vertices in that copy as in \figurename~\ref{fig:h}.
We notice that~$P$ contains at most one $C$-type vertex as~$q$ is adjacent to only one $C$-type vertex in each copy of~$H$
(recall that we assumed that the clauses of the {\sc $3$-Sat} instance contain three distinct variables).
So without loss of generality~$P$ contains $c_1, b, e, f$ and~$g$, but these vertices do not induce a subgraph of~$\overline{P_6}$ (notice, for example, that there are three vertices not adjacent to~$b$), a contradiction.

Let us suppose instead that~$q$ is a $C$-type vertex.
Then~$q$ has degree~$5$ in~$G_{H,I}$ so it cannot be adjacent to every vertex of~$P$, a contradiction.

Finally we assume that~$q$ is a $U$-type vertex and that it belongs to a copy of~$H$ with the labelling of \figurename~\ref{fig:h}.
 By the symmetry of~$H$, without loss of generality, only two cases remain: namely $q=b$ or $q=e$.
If $q=b$, then the possible vertices of~$P$ are $c_1, c_2, c_3$ and $D$-type and $X$-type vertices.
Hence~$P$ contains zero, one or two $C$-type vertices (it cannot contain all three since~$P$ does not contain an independent set of size~$3$).
Then~$P$ contains at least four $D$-type and $X$-type vertices, but these four vertices can induce at most two edges, a contradiction.

So we must have $q=e$ and~$P$ contains vertices from $c_1, c_2, f, g$ and the $D$-type and $X$-type vertices.
Then~$P$ cannot contain both~$c_1$ and~$c_2$ as every pair of non-adjacent vertices in a~$\overline{P_6}$ has a common neighbour (since every pair of adjacent vertices in a~$P_6$ has a common non-neighbour), but~$c_1$ and~$c_2$ have no possible common neighbour as they represent distinct variables and so are joined to different $D$-type and $X$-type vertices.
Observe that for each vertex~$v$ of~$P$, there is at least one other vertex of~$P$ that is not adjacent to~$v$.
So, looking for possible non-neighbours, we have that
\begin{itemize}
\item if~$f$ is in~$P$, then~$c_2$ is in~$P$, and
\item if~$g$ is in~$P$, then~$c_1$ is in~$P$,
\end{itemize}
and so~$P$ cannot contain both~$f$ and~$g$.
Thus~$P$ contains at least four $D$-type and $X$-type vertices, but these four vertices can induce at most two edges. This contradiction completes the proof.\qedllncs
\end{proof}

\bibliography{mybib}

\begin{thebibliography}{10}

\bibitem{BDJLPZ}
A.~Blanch\'e, K.~K. Dabrowski, M.~Johnson, V.~V. Lozin, D.~Paulusma, and
  V.~Zamaraev.
\newblock Clique-width for graph classes closed under complementation.
\newblock {\em CoRR}, abs/1705.07681, 2017.

\bibitem{BCMSZ}
F.~Bonomo, M.~Chudnovsky, P.~Maceli, O.~Schaudt, M.~Stein, and M.~Zhong.
\newblock Three-coloring and list three-coloring of graphs without induced
  paths on seven vertices.
\newblock {\em Combinatorica}, (in press).

\bibitem{BLM04b}
A.~Brandst{\"a}dt, H.-O. Le, and R.~Mosca.
\newblock Gem- and co-gem-free graphs have bounded clique-width.
\newblock {\em International Journal of Foundations of Computer Science},
  15(1):163--185, 2004.

\bibitem{BLM04}
A.~Brandst{\"a}dt, H.-O. Le, and R.~Mosca.
\newblock Chordal co-gem-free and ({$P_5$},gem)-free graphs have bounded
  clique-width.
\newblock {\em Discrete Applied Mathematics}, 145(2):232--241, 2005.

\bibitem{BGPS12a}
H.~Broersma, P.~A. Golovach, D.~Paulusma, and J.~Song.
\newblock Determining the chromatic number of triangle-free {$2P_3$}-free
  graphs in polynomial time.
\newblock {\em Theoretical Computer Science}, 423:1--10, 2012.

\bibitem{CH}
K.~Cameron and C.~T. Ho\`ang.
\newblock Solving the clique cover problem on (bull,{$C_4$})-free graphs.
\newblock {\em CoRR}, abs/1704.00316, 2017.

\bibitem{CRST06}
M.~Chudnovsky, N.~Robertson, P.~Seymour, and R.~Thomas.
\newblock The strong perfect graph theorem.
\newblock {\em Annals of Mathematics}, 164(1):51--229, 2006.

\bibitem{CGKP15}
J.-F. Couturier, P.~A. Golovach, D.~Kratsch, and D.~Paulusma.
\newblock List coloring in the absence of a linear forest.
\newblock {\em Algorithmica}, 71(1):21--35, 2015.

\bibitem{DDP15}
K.~K. Dabrowski, F.~Dross, and D.~Paulusma.
\newblock Colouring diamond-free graphs.
\newblock {\em Proc. SWAT 2016, LIPIcs}, 53:16:1--16:14, 2016.

\bibitem{DGP14}
K.~K. Dabrowski, P.~A. Golovach, and D.~Paulusma.
\newblock Colouring of graphs with {Ramsey-type} forbidden subgraphs.
\newblock {\em Theoretical Computer Science}, 522:34--43, 2014.

\bibitem{DP14}
K.~K. Dabrowski and D.~Paulusma.
\newblock Classifying the clique-width of ${H}$-free bipartite graphs.
\newblock {\em Discrete Applied Mathematics}, 200:43--51, 2016.

\bibitem{EHK98}
T.~Emden-Weinert, S.~Hougardy, and B.~Kreuter.
\newblock Uniquely colourable graphs and the hardness of colouring graphs of
  large girth.
\newblock {\em Combinatorics, Probability and Computing}, 7(04):375--386, 1998.

\bibitem{GJ79}
M.~R. Garey and D.~S. Johnson.
\newblock {\em Computers and Intractability: A Guide to the Theory of
  NP-Completeness}.
\newblock W. H. Freeman \& Co., New York, NY, USA, 1979.

\bibitem{GJPS}
P.~A. Golovach, M.~Johnson, D.~Paulusma, and J.~Song.
\newblock A survey on the computational complexity of colouring graphs with
  forbidden subgraphs.
\newblock {\em Journal of Graph Theory}, 84(4):331--363, 2017.

\bibitem{GLS84}
M.~Gr\"otschel, L.~Lov\'asz, and A.~Schrijver.
\newblock Polynomial algorithms for perfect graphs.
\newblock {\em Annals of Discrete Mathematics}, 21:325--356, 1984.

\bibitem{HH}
P.~Hell and S.~Huang.
\newblock Complexity of coloring graphs without paths and cycles.
\newblock {\em Discrete Applied Mathematics}, 216, Part 1:211--232, 2017.

\bibitem{HKLSS10}
C.~T. Ho\`ang, M.~Kami\'nski, V.~V. Lozin, J.~Sawada, and X.~Shu.
\newblock Deciding $k$-colorability of {$P_5$}-free graphs in polynomial time.
\newblock {\em Algorithmica}, 57(1):74--81, 2010.

\bibitem{HL}
C.~T. Ho\`ang and D.~A. Lazzarato.
\newblock Polynomial-time algorithms for minimum weighted colorings of
  {$(P_5,\overline{P_5})$}-free graphs and similar graph classes.
\newblock {\em Discrete Applied Mathematics}, 186:106--111, 2015.

\bibitem{Hu16}
S.~Huang.
\newblock Improved complexity results on $k$-coloring {$P_t$-free} graphs.
\newblock {\em European Journal of Combinatorics}, 51:336--346, 2016.

\bibitem{KLM09}
M.~Kami\'nski, V.~V. Lozin, and M.~Milani\v{c}.
\newblock Recent developments on graphs of bounded clique-width.
\newblock {\em Discrete Applied Mathematics}, 157(12):2747--2761, 2009.

\bibitem{KR03b}
D.~Kobler and U.~Rotics.
\newblock Edge dominating set and colorings on graphs with fixed clique-width.
\newblock {\em Discrete Applied Mathematics}, 126(2--3):197--221, 2003.

\bibitem{KKTW01}
D.~Kr{\'a}l', J.~Kratochv\'{\i}l, {\relax Zs}.~Tuza, and G.~J. Woeginger.
\newblock Complexity of coloring graphs without forbidden induced subgraphs.
\newblock {\em Proc. WG 2001, LNCS}, 2204:254--262, 2001.

\bibitem{Lo73}
L.~Lov\'asz.
\newblock Coverings and coloring of hypergraphs.
\newblock {\em Congressus Numerantium}, VIII:3--12, 1973.

\bibitem{Lozin2002}
V.~V. Lozin.
\newblock Bipartite graphs without a skew star.
\newblock {\em Discrete Mathematics}, 257(1):83--100, 2002.

\bibitem{KL07}
V.~V. Lozin and M.~Kami\'nski.
\newblock Coloring edges and vertices of graphs without short or long cycles.
\newblock {\em Contributions to Discrete Mathematics}, 2(1), 2007.

\bibitem{LM15}
V.~V. Lozin and D.~S. Malyshev.
\newblock Vertex coloring of graphs with few obstructions.
\newblock {\em Discrete Applied Mathematics}, 216, Part 1:273--280, 2017.

\bibitem{LR04}
V.~V. Lozin and D.~Rautenbach.
\newblock On the band-, tree-, and clique-width of graphs with bounded vertex
  degree.
\newblock {\em SIAM Journal on Discrete Mathematics}, 18(1):195--206, 2004.

\bibitem{MR99}
J.~A. Makowsky and U.~Rotics.
\newblock On the clique-width of graphs with few {$P_4$}'s.
\newblock {\em International Journal of Foundations of Computer Science},
  10(03):329--348, 1999.

\bibitem{Ma13}
D.~S. Malyshev.
\newblock The coloring problem for classes with two small obstructions.
\newblock {\em Optimization Letters}, 8(8):2261--2270, 2014.

\bibitem{Ma}
D.~S. Malyshev.
\newblock Two cases of polynomial-time solvability for the coloring problem.
\newblock {\em Journal of Combinatorial Optimization}, 31(2):833--845, 2016.

\bibitem{ML17}
D.~S. Malyshev and O.~O. Lobanova.
\newblock Two complexity results for the vertex coloring problem.
\newblock {\em Discrete Applied Mathematics}, 219:158--166, 2017.

\bibitem{OS06}
S.-I. Oum and P.~D. Seymour.
\newblock Approximating clique-width and branch-width.
\newblock {\em Journal of Combinatorial Theory, Series B}, 96(4):514--528,
  2006.

\bibitem{Sc05}
D.~Schindl.
\newblock Some new hereditary classes where graph coloring remains {NP-hard}.
\newblock {\em Discrete Mathematics}, 295(1--3):197--202, 2005.

\bibitem{Tarjan85}
R.~E. Tarjan.
\newblock Decomposition by clique separators.
\newblock {\em Discrete Mathematics}, 55(2):221--232, 1985.

\end{thebibliography}
\end{document}